\pgfplotsset{compat=1.13}
\newtheorem{theorem}{Theorem}
\newtheorem{lemma}[theorem]{Lemma}
\newtheorem{definition}[theorem]{Definition}
\newtheorem{corollary}[theorem]{Corollary}
\newtheorem{claim}[theorem]{Claim}
\newtheorem{observation}[theorem]{Observation}
\DeclareRobustCommand*{\ora}{\overrightarrow}
\begin{document}
\title{Understanding and Generalizing Monotonic Proximity Graphs for Approximate Nearest Neighbor Search}
\author{Dantong Zhu\\ \small Georgia Tech 
        \and 
        Minjia Zhang\\ \small Microsoft Corporation}
\maketitle

\begin{abstract}
Graph-based algorithms have shown great empirical potential for the approximate nearest neighbor (ANN) search problem. Currently, graph-based ANN search algorithms are designed mainly using heuristics, whereas theoretical analysis of such algorithms is quite lacking. In this paper, we study a fundamental model of proximity graphs used in graph-based ANN search, called Monotonic Relative Neighborhood Graph (MRNG), from a theoretical perspective. We use mathematical proofs to explain why proximity graphs that are built based on MRNG tend to have good searching performance. We also run experiments on MRNG and graphs generalizing MRNG to obtain a deeper understanding of the model. Our experiments give guidance on how to approximate and generalize MRNG to build proximity graphs on a large scale. In addition, we discover and study a hidden structure of MRNG called conflicting nodes, and we give theoretical evidence how conflicting nodes could be used to improve ANN search methods that are based on MRNG.
\end{abstract}        
\section{Introduction}

Our work is motivated by the classic $k$ nearest neighbor search ($k$NNS) problem. Given a dataset $S$ of $n$ points in $\mathbb{R}^d$, the $k$NNS finds $k$ elements in $S$ that are the closest to a query point $q \in \mathbb{R}^d$ under some metric, for instance, the Euclidean metric. It is a fundamental problem in computational geometry and has many applications in databases and information retrieval. The brute-force algorithm, which checks the distance between $q$ and every element of $S$, takes $O(nd)$ time and is therefore inefficient to run in practice. This motivates us to consider the approximate nearest neighbor (ANN) search problem which, with preprocessing allowed, achieves higher searching efficiency by sacrificing the perfect accuracy of the output. 

Current ANN search algorithms largely fall into the following four lines: space-partitioning tree methods, locality-sensitive hashing, product quantization, and graph-based methods. Recently, graph-based ANN search algorithms have shown better empirical performance than many other classic ANN search algorithms~\cite{hnsw,spread-out-graph,satellite-graph}. In graph-based ANN search, elements in $S$ are preprocessed to be stored in a \textit{proximity graph} $G$, followed by a searching algorithm on $G$ for the queries, where a \textit{proximity graph} on $S$ is a graph in which the set of nodes is $S$ and each node is only adjacent to nodes that are relatively close to it in $\mathbb{R}^d$.

While graph-based ANN search shows great empirical performance, existing algorithms are designed mainly using heuristics and intuition, whereas theoretical analysis of such algorithms is quite lacking. In this paper, we focus on theoretical analysis of graph-based ANN search. Our contribution is as follows:
\begin{itemize}
    \item We provide novel theoretical guarantee on both accuracy and search efficiency for ANN search on Monotonic Relative Neighborhood Graphs (MRNG), which is the fundamental model for the state-of-the-art Navigating Spreading-out Graph (NSG) \cite{satellite-graph} and Satellite System Graph (SSG) \cite{satellite-graph}.
    
    \item We give a mathematically rigorous definition of a parameter space of proximity graphs generalizing MRNG, which subsumes both NSG and SSG.
    
    \item We study the trade-off between efficiency and accuracy on generalized MRNG by comparing the search accuracy on generalized MRNGs with degree upper bound varied. Our study shows that setting a reasonable degree upper bound for generalized MRNG makes both indexing and search much more efficient while maintaining a great level of search accuracy.
    
    \item We discover a hidden structure in MRNG, which is a new category of nodes called conflicting nodes. We provide theoretical justification on how conflicting nodes can be used to improve the search efficiency by helping MRNG escape from local minimum points.
\end{itemize}

The outline of the rest of the paper is as follows. In Section~\ref{sec:related}, we discuss related work. We will formally define monotonicity and Monotonic Relative Neighborhood Graph (MRNG) in Section~\ref{sec:MRNG}, and we will show why they can be useful for graph-based ANN search using mathematical proofs. As MRNG has very expensive indexing cost in a large scale as \cite{satellite-graph} and \cite{satellite-graph} mentioned, we will discuss ways to generalize MRNG in Sections \ref{sec:generalization}-\ref{sec:conflict} to build proximity graphs for graph-based ANN search on a large scale.
In Section~\ref{sec:generalization}, we will discuss two methods of generalizing MRNG, by setting a degree upper bound and restricting a candidate pool from which the algorithm selects neighbors. In Section~\ref{sec:empirical}, we will explore the degree distribution of MRNG and compare the search accuracy in degree-bounded MRNG with the search accuracy in MRNG. Our result indicates that a degree-bounded MRNG with suitable parameters can achieve very similar search accuracy like the MRNG and is significantly more efficient to be constructed and to run search on. Finally, in Section~\ref{sec:conflict} we will share a hidden dimension on generalizing MRNG about a new category of nodes called conflicting nodes. We will provide theoretical evidence on how conflicting nodes can avoid getting stuck at local minimum nodes, and we will discuss several future directions on implementing conflicting nodes in practice.

\textbf{Notations.} We use the following notations: We use $S$ to denote a dataset, $n$ to denote the number of points in $S$, $d$ to denote the dimension of an Euclidean space, and $q$ to denote a query point. For a positive integer $d$, use $\mathbb{R}^d$ to denote the $d$-dimensional Euclidean space. 
For any $x, y \in \mathbb{R}^d$, use $\delta(x, y)$ to denote the $l_2$ distance between them. For any $x \in \mathbb{R}^d$ and $r > 0$, use $B_r(x)$ to denote the open ball of radius $r$ centered at $x$ in $\mathbb{R}^d$. For any $x, y \in \mathbb{R}^d$, $lune(x, y) = B_{\delta(x, y)}(x) \cap B_{\delta(x, y)}(y)$. In a graph $G$, use $V(G)$ and $E(G)$ to denote the set of nodes and edges of $G$, respectively. For any $e \in E(G)$, use $G \backslash e$ to denote the graph obtained from $G$ by deleting the edge $e$. In a directed graph $G$, if there is an edge from $u$ to $v$ for some $u, v \in V(G)$, use $\overrightarrow{uv}$ to denote it and call $v$ an \textit{out-neighbor} or just a \textit{neighbor} of $u$. In a directed graph $G$, for every node $v \in V(G)$, use $N_{out}(v)$ to denote the out-neighborhood of $v$, i.e. $N_{out}(v) = \{ u \in V(G): \overrightarrow{vu} \in E(G)\}$. When we say the degree  or the out-degree of a node $v$ in a directed graph, we mean the number of out-neighbors of $v$, i.e. $|N_{out}(v)|$. In this work, we also stick with $l_2$ metric as the only metric for ANN search.           

\section{Related Work} \label{sec:related}

The literature on approximate nearest neighbor search is vast, and hence, we restrict our attention to the most relevant works here. There have been a lot of studies on ANN indexing using space partitioning based methods, which partition the vector space and index the resulting sub-spaces for fast retrievals, such as KD-Tree~\cite{kd-tree}, R{$^*$}-Tree~\cite{r-star-tree}, and Randomized KD-Tree~\cite{flann}. However, the complexity of these methods becomes not more efficient than a brute-force search as the dimension becomes large (e.g., $>$15)~\cite{worst-case-kdtree}. Therefore, they perform poorly on embedding vectors, which are at least a few tens or even hundreds of dimensions. Prior works have also devoted extensive efforts over locality-sensitive hashing (LSH)~\cite{practical-lsh,near-optimal-lsh}. These methods have solid theoretical foundations and allow us to estimate the search time or the probability of successful search. However, LSH and similar approaches have been designed for large
sparse vectors with hundreds of thousands of dimensions. In contrast, we are interested in search of dense continuous vectors with at most a few hundreds of dimensions (e.g., representations learned by neural networks). For these vectors, graph-based approaches outperform LSH-based methods by a large margin on large-scale datasets~\cite{lopq,ann-experiments-analysis,hnsw}. In a separate line of research, people have studied compressing vectors into shortcodes through product quantization~\cite{product-quantization} and its extensions, such as OPQ~\cite{opq}, Cartesian KMeans~\cite{cartesian-kmeans}, and LOPQ~\cite{lopq}. However, although these methods achieve outstanding memory savings, they are sensitive to quantization errors and can result in poor recall@1 accuracy on large datasets~\cite{link-and-code,ann-experiments-analysis}.


More recently, Malkov and Yashunin found that graphs that satisfy the \emph{Small World} property are good candidates for \emph{best-first search}. They introduce Hierarchical Navigable Small World (HNSW)~\cite{hnsw}, which iteratively builds a hierarchical k-NN graph with randomly inserted long-range links to approximate Delaunay Graph~\cite{delaunay-graph}. For each query, it then performs a walk, which eventually converges to the nearest neighbor in logarithmic complexity. Subsequently, Fu et al. proposed NSG, which approximates \emph{Monotonic Relative Neighbor Graph} (MRNG)~\cite{spread-out-graph} that also involves long-ranged links for enhancing connectivity. 
To the best of our knowledge, 
both HNSW and NSG are considered as the state-of-the-art methods for ANN search~\cite{ann-experiments-analysis,lernaean-hydra} and have been adopted by major players in the industry~\cite{nsw,diskann,faiss-source-code}. 
However, since the approximations in these methods are based on heuristics and lack rigorous theoretical support, it leaves questions on whether more effective graphs can be constructed with theoretically grounded methods.  

\section{Monotonicity and Monotonic Relative Neighborhood Graphs (MRNG)} \label{sec:MRNG}

The accuracy of search output and the time complexity of the search are both important factors for every search algorithm. In the area of graph-based ANN search, every work has been trying to optimize both factors, whereas there exists barely any theoretical model that can be proved mathematically to have both high accuracy and good search time complexity. Thus, it is of our interest to study a graph model that achieves both factors. This then leads to our discussion of monotonicity of graphs and Monotonic Relative Neighborhood Graphs (MRNG).

Monotonicity is a property of graphs introduced by Dearholt et al.~\cite{monotonic-search-network} back in the 1980s, before graph-based ANN search was studied, and it was later used to design proximity graphs for graph-based ANN search to facilitate efficient search. In particular, the state-of-the-art proximity models NSG and SSG~\cite{spread-out-graph,satellite-graph} are both based on monotonicity, which is formally defined as follows.

\begin{definition}[Monotonicity]
In a directed graph $G$ with $V(G) \subseteq \mathbb{R}^d$, let $P$ be a path going through nodes $v_1, v_2, ..., v_l$ in order. For $q \in \mathbb{R}^d$, say $P$ is a monotonic path with respect to $q$ if $\delta(v_i, q) > \delta(v_{i+1}, q)$ for all $i = 1,..., l-1$. If $v_l = q$, call $P$ a monotonic path. If there is a monotonic path going from $p$ to $q$ for every $p, q \in V(G)$, then call $G$ a monotonic graph.
\end{definition}

\subsection{Guaranteed accuracy of monotonic graphs}
Lemma~\ref{lemma_monotonic} remarks how monotonicity guarantees search accuracy: In a monotonic graph $G$, for every $p, q \in V(G)$, $q$ can be found starting from $p$ without getting stuck in any local minimum point via Algorithm~\ref{alg_closer_and_go}, which is a simple greedy search algorithm. The proof of Lemma~\ref{lemma_monotonic} can be found in the appendix.

\begin{algorithm}
\caption{closer-and-go($G, p, q$)}
\textbf{Require:} directed graph $G$ with $V(G) \subseteq \mathbb{R}^d$, starting node $p \in V(G)$, query $q \in \mathbb{R}^d$\\
\textbf{Ensure:} A path $P$ going through $p = v_1$, $v_2$, ..., $v_l$ such that $\delta(v_i, q) > \delta(v_{i+1}, q)$ for all $i = 1, ..., l-1$.

\begin{algorithmic}[1]\label{alg_closer_and_go}
\STATE Set $i := 1$, $v_i := p$
\WHILE{$\exists u \in N_{out}(v)$ such that $\delta(u, q) < \delta(v_i, q)$}
    \STATE $i := i+1$
    \STATE $v_i := u$
\ENDWHILE
\end{algorithmic}
\end{algorithm}

\begin{lemma}\label{lemma_monotonic}
Let $G$ be a monotonic graph with $V(G) \subseteq \mathbb{R}^d$. Let $p, q \in V(G)$ be arbitrary, then Algorithm~\ref{alg_closer_and_go} closer-and-go($G, p, q$) finds \emph{a monotonic path from $p$ to $q$ in $G$}.
\end{lemma}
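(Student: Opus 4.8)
The plan is to split the argument into two phases: first show that Algorithm~\ref{alg_closer_and_go} halts, and then show that the node where it halts is necessarily $q$; monotonicity is used only in the second phase.

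First I would observe that the while loop strictly decreases the distance to $q$ at every iteration. Indeed, the algorithm advances from $v_i$ to $v_{i+1} = u$ only when $\delta(u, q) < \delta(v_i, q)$, so the values $\delta(v_1, q) > \delta(v_2, q) > \cdots$ form a strictly decreasing sequence. Consequently no node can occur twice among $v_1, v_2, \dots$, and since $V(G)$ is finite the loop executes only finitely many times. Thus the algorithm terminates at some node $v_l$, and by the loop condition the prefix $v_1 = p, v_2, \dots, v_l$ already satisfies $\delta(v_i, q) > \delta(v_{i+1}, q)$ for all $i = 1, \dots, l-1$.

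The core step is to prove $v_l = q$, which I would do by contradiction. Suppose $v_l \neq q$. Since $v_l, q \in V(G)$ and $G$ is monotonic, there exists a monotonic path from $v_l$ to $q$, say through $u_1 = v_l, u_2, \dots, u_m = q$. The defining inequality of monotonicity gives $\delta(u_2, q) < \delta(u_1, q) = \delta(v_l, q)$, and because this is a path in $G$ we have $u_2 \in N_{out}(v_l)$. Hence $u_2$ is an out-neighbor of $v_l$ strictly closer to $q$, which means the while loop could not have terminated at $v_l$ — a contradiction with the fact that the loop did halt there. Therefore $v_l = q$.

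Combining the two phases, the output path $v_1 = p, v_2, \dots, v_l = q$ has strictly decreasing distance to $q$ and ends at $q$, so it is by definition a monotonic path from $p$ to $q$, as claimed. The main obstacle is really the contradiction step: the key conceptual point is that monotonicity is precisely the property that forbids the greedy procedure from stalling at any node other than $q$ (i.e. it rules out spurious local minima), whereas termination and the final read-off are routine once that observation is isolated.
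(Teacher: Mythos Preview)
Your proof is correct and follows essentially the same approach as the paper: both use monotonicity to show that at any $v_i \neq q$ there is an out-neighbor strictly closer to $q$, and both invoke finiteness of $V(G)$ for termination. The only difference is presentational---you establish termination first and then argue $v_l = q$ by contradiction, whereas the paper first argues the loop cannot halt before reaching $q$ and then notes termination---but the underlying argument is the same.
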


\subsection{Finding the most efficient monotonic graph}
Note that by the definition of monotonicity, if a graph $G$ is monotonic, so is every supergraph of $G$. Also note that the search complexity in a proximity graph in graph-based ANN search is the sum of the number of neighbors searched at each node visited by the search algorithm, which is bounded by the sum of out-degrees of the nodes on the path returned. Therefore, while maintaining the monotocity of the proximity graph to guarantee search accuracy, we would like to have as few edges in the graph as possible so that the search can be done efficiently as well.

Fu et al. \cite{spread-out-graph} points out that Delaunay graph on any given dataset is monotonic, while it suffers degree explosion when the dimension of the dataset is high. This means that searching on a Delaunay graph has good accuracy but is not efficient in practice. Many graph-based ANN search algorithms, such as NSW and HNSW~\cite{nsw,hnsw}, are built from a Delaunay graph by eliminating edges, while their criteria of which edges to eliminate is quite heuristic: it is unclear whether the resulting graph still consists of redundant edges or, on the other hand, lacks some important edges. To understand monotonic graphs better and to use them better, we would like to have a more rigorous criteria for selecting edges to be in the proximity graph of ANN search.

Fu et al. introduced a theoretical model 
called the Monotonic Relative Neighborhood Graph (MRNG), which is a monotonic graph as proved by Fu et al., and it was used to build the practical state-of-the-art NSG~\cite{spread-out-graph}. Fu et al., however, focused on improving the performance of NSG, which is an approximation of MRNG using heuristics, and did not give much theoretical analysis of MRNG. Here, we will restate the formal definition of MRNG, and we will mathematically prove that this is a well-defined model and is an \emph{edge-minimal monotonic graph} as we desire, i.e. deleting any edge from an MRNG would break the monotonicity.

\begin{definition} \label{def_MRNG}
For a finite set $S \subseteq \mathbb{R}^d$, a directed graph $G$ with $V(G) = S$ is called a monotonic relative neighborhood graph (MRNG) on $S$ if for every $x, y \in S$, $\overrightarrow{xy} \in E(G)$ if and only if $\overrightarrow{xz} \not\in E(G)$ for every $z \in lune(x, y) \cap S$.
\end{definition}

\cite{spread-out-graph} introduced Definition~\ref{def_MRNG} and proved that an MRNG is monotonic. However, it is not even clear whether MRNG is well defined, i.e. if MRNG is a unique mathematical object given any dataset and if it is edge-minimal. In this work, we give our own proof that MRNG is a uniquely defined edge-minimal monotonic graph on any given dataset, as stated in Lemma~\ref{lem_MRNG} whose proof can be found in the appendix.

\begin{lemma} \label{lem_MRNG}
Let $S \subseteq \mathbb{R}^d$. There exists a unique MRNG on $S$, and that this MRNG is an edge-minimal monotonic graph.
\end{lemma}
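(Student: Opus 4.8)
The plan is to treat the three assertions---existence, uniqueness, and edge-minimality---separately, and to read the biconditional in Definition~\ref{def_MRNG} as a recursive rule rather than as a static global constraint. The pivotal observation throughout is that the membership condition for $\overrightarrow{xy}$ refers only to edges $\overrightarrow{xz}$ with $z\in lune(x,y)$, and every such $z$ satisfies $\delta(x,z)<\delta(x,y)$ because $lune(x,y)\subseteq B_{\delta(x,y)}(x)$ is cut out by a strict inequality.

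For existence and uniqueness I would exploit that the rule for $\overrightarrow{xy}$ involves only edges out of the same source $x$, so the out-neighborhood of each vertex can be decided independently of all other vertices; and within a fixed $x$, the status of $\overrightarrow{xy}$ depends only on edges to points strictly closer to $x$. I would fix $x$, list $S\setminus\{x\}$ in nondecreasing order of distance to $x$, and induct along this order: for a nearest point the lune meets $S$ in the empty set, forcing the edge, while for a general $y$ every element of $lune(x,y)\cap S$ lies strictly earlier in the order, so the biconditional reads off a forced truth value from already-decided edges. This simultaneously produces a graph satisfying the definition (define the edges by the rule) and shows any satisfying graph must coincide with it at every step, giving uniqueness. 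Ties in distance are harmless here precisely because the lune uses strict inequalities, so an equidistant point never enters the recursion.

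Monotonicity I would either cite from \cite{spread-out-graph} or re-derive in one line: for a target $q$ and any $p$, if $\overrightarrow{pq}\notin E(G)$ then Definition~\ref{def_MRNG} supplies some $z\in lune(p,q)\cap S$ with $\overrightarrow{pz}\in E(G)$, and $z\in lune(p,q)$ forces $\delta(z,q)<\delta(p,q)$; inducting on the finite set of distances to $q$ then yields a monotonic path $p\to z\to\cdots\to q$.

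The substance is edge-minimality, and this is where I expect the main obstacle. For a fixed $\overrightarrow{xy}\in E(G)$ I would show that deleting it destroys every monotonic path from $x$ to $y$, so $G\setminus\overrightarrow{xy}$ fails monotonicity. Any monotonic path from $x$ to $y$ begins with an edge $\overrightarrow{xv}$ satisfying $\delta(v,y)<\delta(x,y)$, and the goal is to force $v=y$, making the lone edge $\overrightarrow{xy}$ the only such path. If $\delta(x,v)<\delta(x,y)$ then $v\in lune(x,y)$, so from $\overrightarrow{xy}\in E(G)$ the rule gives $\overrightarrow{xv}\notin E(G)$, a contradiction; if $\delta(x,v)>\delta(x,y)$ then $\delta(y,v)<\delta(x,y)<\delta(x,v)$ places $y\in lune(x,v)$, so from $\overrightarrow{xv}\in E(G)$ the rule gives $\overrightarrow{xy}\notin E(G)$, again a contradiction. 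The remaining case is the tie $\delta(x,v)=\delta(x,y)$ with $v\neq y$, where neither point enters the other's lune and both edges can legitimately coexist---indeed a symmetric three-point configuration shows that edge-minimality genuinely fails in that degenerate situation. I therefore expect to run the argument under the standard general-position assumption that the pairwise distances in $S$ are distinct, which eliminates the tie and is the one hypothesis standing between this and a fully unconditional statement; under it, $v=y$ is forced and $G\setminus\overrightarrow{xy}$ leaves $x$ with no monotonic route to $y$.
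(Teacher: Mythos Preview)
Your approach is essentially the paper's: it too splits the lemma into uniqueness (via the observation that the status of $\overrightarrow{xy}$ is determined by edges $\overrightarrow{xz}$ with $\delta(x,z)<\delta(x,y)$, argued by taking a shortest disagreeing edge), monotonicity (the one-line greedy step you describe), and edge-minimality (showing that the first hop $\overrightarrow{xv_2}$ of any monotonic $x\to y$ path is forced to be $\overrightarrow{xy}$, by eliminating both $\delta(x,v_2)<\delta(x,y)$ and $\delta(x,v_2)>\delta(x,y)$ exactly as you do).

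The one substantive difference is that you are more careful than the paper about the tie case $\delta(x,v_2)=\delta(x,y)$. The paper's edge-minimality proof asserts $\delta(x,v_2)>\delta(x,y)$ from $v_2\in B_{\delta(x,y)}(y)\setminus lune(x,y)$, but since the lune is open this only yields $\delta(x,v_2)\geq\delta(x,y)$; the paper silently carries the same general-position gap you flag explicitly. Your instinct that the tie genuinely breaks edge-minimality is correct---an isosceles configuration with apex $x$, say $\delta(x,y)=\delta(x,v)>\delta(v,y)$, has all of $\overrightarrow{xy},\overrightarrow{xv},\overrightarrow{vy}$ as MRNG edges, and $x\to v\to y$ survives the deletion of $\overrightarrow{xy}$. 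One caution: if by ``symmetric three-point configuration'' you meant the equilateral triangle, that particular example does \emph{not} work, since there $\delta(v,y)=\delta(x,y)$ and the step $x\to v$ fails to strictly decrease the distance to $y$.
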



\section{Generalization of MRNG} \label{sec:generalization}

Although MRNG seems to be an ideal proximity graph for ANN search based on our analysis in the previous section, as \cite{spread-out-graph} points out, MRNG is very expensive to be constructed in a large scale. In this section, we discuss two methods to generalize MRNG so that it can be utilized better in a large scale: (1) bounding the degree of the graph, and (2) bounding the candidate pool for edge-selection of each node. We call these generalizations \textit{generalized MRNG}. The ultimate goal of studying generalized MRNG is to find some suitable parameters for the generalization such that the resulting graph is efficient to be constructed and, in the meanwhile, maintains a good navigability as MRNG. We note that the state-of-the-art model NSG is a generalized MRNG.

\begin{algorithm}
\caption{build-graph($S, m, (U_1, ..., U_n)$)}
\textbf{Require:} dataset $S \subseteq \mathbb{R}^d$ of $n$ points, degree upper bound $m$, candidate pool $U_x \subseteq S \backslash \{x\}$ for neighbors of $x$ for every $x \in S$\\
\textbf{Ensure:} Generalized MRNG on $S$ with every node degree bounded by $m$
\begin{algorithmic}[1]\label{alg_build_graph}
\FOR{$x \in S$}
    \STATE Set $N_x := \emptyset$ //$N_x$ is to contain neighbors of $x$
    \STATE Sort nodes in $U_x$ in increasing order of distance to $x$
    \WHILE{$U_x \neq \emptyset$ and $|N_x| < m$}
        \STATE $y := $ first node in $U$
        \STATE $U_x := U_x \backslash \{y\}$
        \IF{$\delta(x, y) < \delta(r, y)$ for every $r \in N_x$}
            \STATE $N_x := N_x \cup \{y\}$
        \ENDIF
    \ENDWHILE
\ENDFOR
\end{algorithmic}
\end{algorithm}

By the definition of MRNG, there is a naive algorithm of constructing it: For every point $x \in S$, let $U_x = S \backslash \{x\}$ and sort points in $U_x$ in increasing order of distances to $x$. Then, for every node $y \in U_x$ from the beginning to the end, if $\delta(x, y) < \delta(r, y)$ for every existing neighbor $r$ of $x$ (meaning no existing neighbor $r$ of $x$ is contained in $lune(x, y)$), then add $y$ as a neighbor of $x$. This algorithm has time complexity $O(n^2(\log n + \Delta))$ where $\Delta$ is the maximumn degree of the resulting MRNG graph.

One factor that makes constructing MRNG inefficient is that there could be nodes in an MRNG with very high degree, making the $\Delta$ component big. This is a common problem in many graph-based ANN search algorithms. We note that almost every existing work simply sets an arbitrary degree bound to the graph for convenience. Another two major factors that slow down the construction of MRNG are that the pairwise distance between every two nodes in $S$ have to be computed, and that for every node $x$, all nodes in $S \backslash \{x\}$ have to be sorted by distance to $x$. Therefore, to speed-up this process, we would like to choose the candidate pool $U_x$ to be some non-trivial subset of $S$, and only choose neighbors for $x$ among $U_x$.

In Algorithm~\ref{alg_build_graph}, if the degree of each node is restricted by $m$ for some $m < n$, it builds a \textit{degree-bounded MRNG} with time complexity $O(n^2 (\log n + m))$. When the candidate pool $U_x$ is replaced by some non-trivial subset of $S \backslash \{x\}$ for each $x \in S$, it has time complexity $O(\sum_{x \in S} |U_x|(\log |U_x| + m))$. If for some constant $l$ we have $|U_x| \leq l$ for all $x \in S$, then the complexity would be at most $O(nl(\log l + m))$. A graph on $S$ that is built by Algorithm~\ref{alg_build_graph} with any parameters $m$ and $U_1, ..., U_n$ is a \textit{generalized MRNG}. We note that NSG is a special generalized MRNG whose candidate pool for each node is selected by a $k$NN graph. We see from the previously discussed complexity of $O(nl(\log l + m))$ that, when we construct a generalized MRNG in large scale, the values of $l$ and $m$ determine the trade-off between the navigability and the indexing complexity of the graph.    
\section{Behaviors of Generalized MRNG} \label{sec:empirical}

With MRNG and generalized MRNG defined and the complexity of them analyzed, we wonder what MRNG and generalized MRNG really look like. In particular, we would like to know how MRNG and generalized MRNG differ from each other in terms of structural properties and navigability for ANN search, as our ultimate goal is to find some generalized MRNG that maintains nice navigability for ANN search and are more efficient to build and run search algorithm on than MRNG.

We focus on degree-bounded MRNG in this section. We will first study a theoretical degree upper bound of MRNG and empirically observe the degree distribution of MRNG as $n$ and $d$ change. We will also study how setting different degree bounds affects the indexing cost and the searching accuracy in the graph, as almost every existing graph-based ANN search algorithm sets an arbitrary degree upper bound of the graph to reduce both the indexing and searching complexity. We were fairly surprised to discover that setting a reasonable upper bound of the graph in fact does not affect the searching accuracy of MRNG in both high and low dimensions. This gives us more confidence to set a degree bound when generalizing MRNG in practice. 

\subsection{Exponential degree upper bound in theory}
Observe that by the definition of MRNG, the degree between any two edges coming out of the same node is no smaller than $60^\circ$. By an $\epsilon$-net argument in Lemma 5.2 of \cite{epsilon-net}, we can obtain the following lemma.
\begin{lemma} \label{lem_exp_bound}
For $S \subseteq \mathbb{R}^d$, let $G$ be the MRNG on $S$. Then, the maximum degree of $G$ is at most $O((1 + 6/\pi)^d)$.
\end{lemma}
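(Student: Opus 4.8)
The plan is to reduce the degree bound to a sphere-packing estimate, using the $60^\circ$ angular-separation property noted above as the geometric bridge and then invoking the cited $\epsilon$-net bound. Concretely, fix a node $x$ and map each out-neighbor $w \in N_{out}(x)$ to the unit direction $u_w = (w-x)/\delta(x,w) \in S^{d-1}$. If I can show that any two such directions make an angle at least $\pi/3$, then $\{u_w : w \in N_{out}(x)\}$ is a $(\pi/3)$-separated subset of the unit sphere, and Lemma 5.2 of \cite{epsilon-net} immediately caps its size --- hence the out-degree of $x$ --- at $(1+6/\pi)^d$. Since $x$ is arbitrary, this bounds the maximum degree of $G$.

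The heart of the argument is therefore the angular-separation claim, which I would prove directly from Definition~\ref{def_MRNG}. Let $y, z \in N_{out}(x)$ be distinct and assume (general position) that all pairwise distances are distinct, say $\delta(x,y) < \delta(x,z)$. Because $\overrightarrow{xy} \in E(G)$, no out-neighbor of $x$ lies in $lune(x,y)$; in particular $z \notin lune(x,y)$. Symmetrically $y \notin lune(x,z) = B_{\delta(x,z)}(x) \cap B_{\delta(x,z)}(z)$. Since $\delta(x,y) < \delta(x,z)$ we have $y \in B_{\delta(x,z)}(x)$, so $y \notin lune(x,z)$ forces $y \notin B_{\delta(x,z)}(z)$, i.e. $\delta(y,z) \ge \delta(x,z)$. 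Thus in the triangle $xyz$ the side $yz$ is the (strictly) longest, so the angle $\angle yxz$ opposite to it is the largest of the three and hence at least $\pi/3$; equivalently $\langle u_y, u_z \rangle \le \tfrac12$.

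With the separation in hand, the remaining step is bookkeeping: the directions are pairwise at geodesic distance $\ge \pi/3$ on $S^{d-1}$, so the standard $\epsilon$-net/packing bound of the form $(1+2/\epsilon)^d$ (Lemma 5.2 of \cite{epsilon-net}), applied with $\epsilon = \pi/3$, yields $|N_{out}(x)| \le (1 + 6/\pi)^d$. I expect the main obstacle to be the degenerate cases rather than the generic argument: when $\delta(x,y) = \delta(x,z)$ the open-ball definition no longer forces $\delta(y,z)$ to be large, so the clean ``longest-side'' conclusion can fail, and one must either assume general position (distinct pairwise distances, which can be arranged by an arbitrarily small perturbation without changing the asymptotic bound) or argue that such ties are consistent with the greedy construction of Algorithm~\ref{alg_build_graph}. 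A secondary point to verify is that the separation $\pi/3$ is fed into the cited lemma in exactly the normalization that produces the stated constant $6/\pi$, rather than the cruder chord-length packing bound of $3^d$.
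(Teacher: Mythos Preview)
Your proposal is correct and follows exactly the approach the paper sketches: the paper's entire justification is the one-sentence observation immediately preceding the lemma (that any two out-edges from a common node are separated by at least $60^\circ$) together with an appeal to Lemma~5.2 of \cite{epsilon-net}, and you have faithfully expanded precisely this into a detailed argument. Your write-up in fact supplies more than the paper does---a proof of the angular-separation claim from Definition~\ref{def_MRNG} and an honest flag on the equal-distance degeneracy---so there is nothing to correct.
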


Lemma~\ref{lem_exp_bound} shows that the upper bound of the MRNG of any dataset of $n$ points in $d$-dimension is in fact independent of $n$. This means that there is no degree explosion for fixed dimension as the size of the dataset grows. We note that \cite{spread-out-graph} also proved that the maximum degree of MRNG is independent from $n$, but their proof did not point out that the degree bound could in fact heavily depend on $d$. The exponential dependency on $d$ needs to be mentioned, as in practice the dimension of data could be quite different from one to another, from a hundred to a thousand.

\subsection{Observation of degree distribution of MRNG}

Although the degree upper bound in Lemma~\ref{lem_exp_bound} is independent from the size of the dataset, it does not look as satisfying since it is exponential in the dimension $d$ of the space. However, it is not clear whether or not the bound $O((1 + 6/\pi)^d)$ is tight. For every positive integer $d$, let $T(d)$ be the maximum degree of an MRNG among the MRNGs of all datasets in $\mathbb{R}^d$. Lemma~\ref{lem_exp_bound} tells us that $T(d) = O((1 + 6/\pi)^d)$. But is $T(d)$ equal to $\Theta((1 + 6/\pi)^d)$, i.e. is $T(d)$ really exponential in $d$? To our best knowledge, the answer is unknown.

To better understand the degree bound of MRNG, we took an empirical approach. We constructed MRNG with various $n$ and $d$ and observed the degree distributions empirically. In our experiment, with each given pair of $n$ and $d$, we independently generate $n$ points uniformly at random from the unit hyper-cube $[0, 1]^d$ to build the dataset $S$. We then run Algorithm~\ref{alg_build_graph} with $m = n$ and $U_x = S \backslash \{x\}$ for all $x \in S$ to build MRNG on the dataset generated.

\begin{figure}
\centering
\subfigure[mean = 11, min = 2, max = 27]{\includegraphics[scale = 0.5]{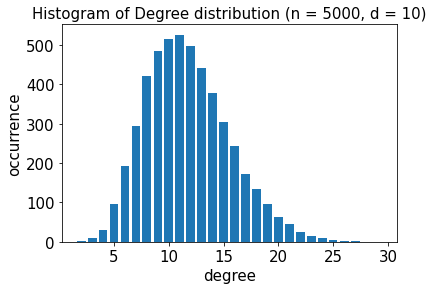}}%
\subfigure[mean = 37, min = 5, max = 203]{\includegraphics[scale = 0.5]{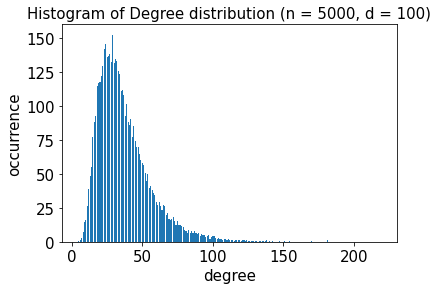}}%
\caption{Histogram of Degree Distribution of MRNG on 5000 nodes, with $d = 10, 100$}
\label{fig_deg_hist}
\end{figure}

In Figure~\ref{fig_deg_hist}, we plot the distributions of the degrees of nodes in an MRNG with $n$=5000, $d$= 10 and another MRNG with $n$=5000, $d$=100. We choose $n$=5000 because the MRNG cannot be efficiently constructed with larger $n$ due to the $O(N^2)$ computation needed to rank near neighbors for each point in the dataset. We observe that on the same number of data points, the average degree in 100-dimension is more than 3 times as big as the average degree in 10-dimension, and the maximum degree in 100-dimension is more than 7.5 times as big as the maximum degree in 10-dimension. Moreover, the degree distribution behaves more like a normal distribution when the dimension is small, whereas it is way more skewed toward to lower degrees when the dimension is high with a few points of high degree. These observations agree with the curse of dimensionality: constructing MRNG in high dimension requires a significantly larger amount of time than constructing MRNG in low dimension. The skewness of the degree distribution in high dimension also suggests that a decent amount of time would be spent on only a few nodes that have high degree during the construction of MRNG in high dimension. These observations then raise the question that how much the extra edges in a high-dimensional MRNG is really helpful with navigation in a graph-based ANN search algorithm? We will discuss this question more deeply in Section 4.3.

\begin{figure}
\centering
\subfigure{\includegraphics[scale=0.5]{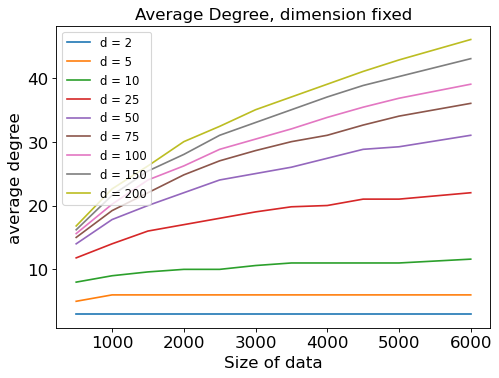}}%
\subfigure{\includegraphics[scale=0.5]{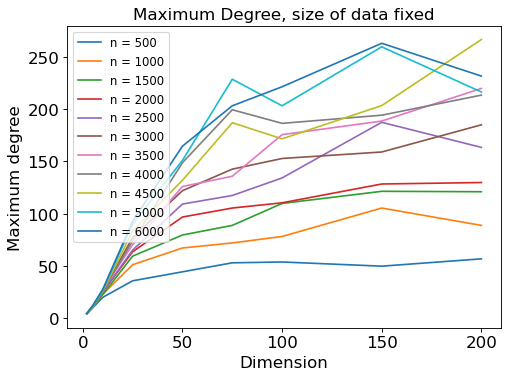}}%
\caption{Average and maximum degrees of MRNG as $n$ and $d$ vary.}
\label{fig_deg_trend_d}
\end{figure}

In Figure~\ref{fig_deg_trend_d}, we observe the trend of average and maximum degrees of MRNG as $n$ and $d$ vary: (1) When $d$ is fixed and is not more than $10$, we observe that both average and maximum degrees of MRNG hit a plateau as $n$ increases. We believe this plateau value for maximum degree is around the theoretical bound $T(d)$. We also observe that the plateau value for average degree remains no more than a third of the plateau value for the maximum degree as $n$ goes up. This seems to imply that there exists an asymptotic upper bound for the average degree of MRNG which is independent from the size of the dataset and maintains certain distance from the maximum degree tight bound $T(d)$. (2) When the dimension is high, we do not observe a plateau of the maximum degree any more where $n$ is set to be no more than 6000. (3) Furthermore, we observe that with the fixed number of data points, both the average degree and the maximum degree increase as $d$ goes up with a decreasing rate.

We believe that $T(d)$, which is the maximum degree of an MRNG among the MRNGs of all datasets in $\mathbb{R}^d$, is as we observed for the plateau values for small dimensions in Figure~\ref{fig_deg_trend_d}. It cannot yet be observed for higher dimensions, as we restricted the size of the data to be no more than 6000. Observation (3) gives us some evidence to suspect that the bound $O((1 + 6/\pi)^d)$ is not tight for $T(d)$, and that it is possible that $T(d)$ is sub-exponential in $d$. This means that constructing an MRNG in high dimension may be more efficient than it seems in Lemma~\ref{lem_exp_bound}.

\subsection{Best-first search on degree-bounded MRNG}
In theory, we showed that every MRNG is an edge-minimal monotonic graph, and therefore any non-trivial degree-bounded MRNG is not monotonic. However, although not monotonic, how navigable is a graph that is obtained from MRNG by eliminating some edges, and how does its navigability differ from the navigability of MRNG? Note that the number of edges in the proximity graph heavily affects the searching time complexity, which is simply the the sum of neighbors searched at each node visited by the search algorithm. If adding certain edges in MRNG does not contribute to the navigability as much, it would make sense to eliminate them so that the search is more time efficient while maintaining the accuracy of the search. In almost every practical ANN search proximity model, an arbitrary degree upper bound is set. However, a fair comparison among different degree bounds is missing. In this section, we explore how different values of degree bounds affect the searching efficiency in degree-bounded MRNGs.



In our experiment, for a given pair of $n$ and $d$, we again independently generate $n$ points uniformly at random from the unit hyper-cube $[0, 1]^d$ to build the dataset $S$. We then run Algorithm~\ref{alg_build_graph} with varied values of $m$ and $U_x = S \backslash \{x\}$ for all $x \in S$ to build the degree-bounded MRNGs. With 200 queries that are generated from $[0, 1]^d$ independently uniformly at random, we use best-first search 
to search for the top 1 nearest neighbor of each query and use the averaged accuracy among these 200 queries as a measurement of navigability in our analysis. We note that best-first search is a commonly used searching algorithm in many existing graph-based ANN search works~\cite{hnsw,spread-out-graph,satellite-graph}. In a best-first search, we have the number of nodes checked as a parameter, which allows us to control the total number of computation distances easily. Note that if the number of nodes checked is equal to $n$, the algorithm checks every node in the graph and therefore would guarantee to find the true nearest neighbor.

\begin{figure}
\centering
\subfigure[average degree = 21, maximum degree = 90]{\includegraphics[scale = 0.50]{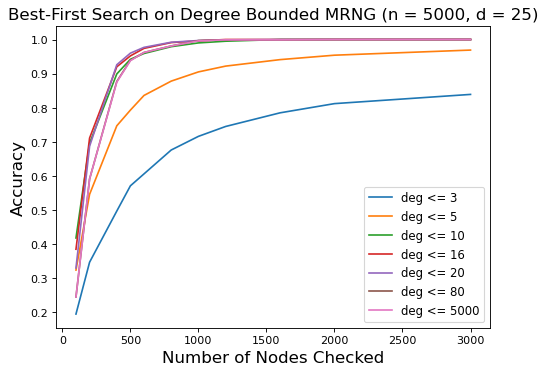}}%
\subfigure[average degree = 37, maximum degree = 203]{\includegraphics[scale = 0.50]{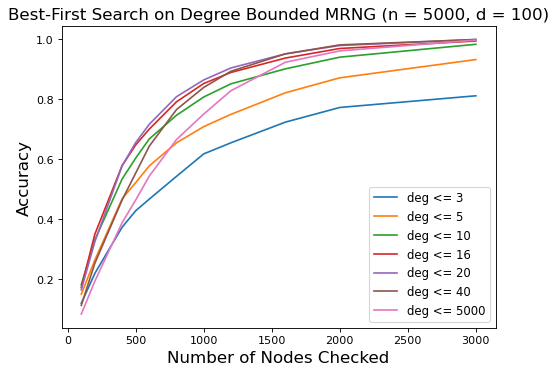}}%
\caption{Best-first search accuracy on degree-bounded MRNG}
\label{figure_bounded_beam}
\end{figure}

Figure~\ref{figure_bounded_beam}(a) shows search accuracy on degree-bounded MRNG with varied defgree bounds for $n$=5000, $d$=25 and plot (b) for $n$=5000, $d$=100. We observe that for both low and high dimensions, best-first search on some degree-bounded MRNG works as well as on the authentic MRNG. In particular, when $d$=25, MRNG has average degree of 21 and maximum degree 90: the accuracy curve for degree bound 10 matches the accuracy for unbounded degree, both achieving around 0.95 accuracy after checking 500 nodes. When $d$=100, MRNG has average degree 37 and maximum degree 203: the accuracy curve for degree bound 18 matches the accuracy for unbounded degree, both achieving around 0.90 accuracy after checking 1200 nodes. We also observe that accuracy seems to be the highest when the degree bound is set to 20, and that when the degree bound is larger than 20 the accuracy starts to fall down.

Recall that when constructing MRNG in Algorithm~\ref{alg_build_graph}, the neighbors of every node $x$ in an MRNG are added in the order of increasing order of distance to $x$. When setting the degree upper bound to $m$, the neighbors of $x$ simply become the top $m$ neighbors of $v$ in the exact MRNG. Therefore, our above observations indicate that adding neighbors of $x$ in MRNG that are further away from $x$ does not really make the search more effective. We conclude that setting an approximate upper bound on the degree in a generalized MRNG not only improves the searching time complexity on the graph but also helps with the search accuracy. In the experiment with $n = 5000$ and $d = 25$ or $100$, we see the optimal degree bound is only about one half of the average degree of the exact MRNG. This means that by setting an appropriate degree bound, besides improvements on search time complexity and accuracy, we can also save at least a half of the indexing cost comparing to constructing the original MRNG.

Furthermore, we observe a threshold of the degree upper bound which makes a phase transition of the search accuracy for both low and high dimensions. In particular, we see that the search accuracy can be significantly improved when the degree upper bound is set to below 10 or 16 for low and high dimensions, respectively, and once the degree upper bound is greater than this threshold, there is no significant change of the search accuracy. We believe that this phase transition of search accuracy indicates a phase transition of certain properties of the graph. For instance, the connectivity of the graph. When the degree upper bound is very small, the graph is a union of many disconnected small subgraphs; as the bound grows, these subgraphs start to merge, the number of connected components goes down, and the search accuracy goes up; once the bound reaches the threshold, the graph achieves a certain level of connectivity, and any stronger version of connectivity seems to not help with the navigability for finding nearest neighbors in the graph any more. The phase transition phenomenon could be related to other properties of the graph as well.
\section{Conflicting Nodes: Generalizing MRNG to Escape From Local Minimum} \label{sec:conflict}

From a mathematical point of view, the edge-selection criteria of MRNG is natural and pretty as a definition for a fundamental graph model. Besides studying degree upper bound and restricted candidate pools for selecting neighbors for each node, we believe that there exist deeper structural properties of MRNG that are worth exploring to be used to generalize MRNG, which may eventually be useful to improve current graph-based ANN search work in large scale.

In this section, we discuss a new category of nodes called \textit{conflicting nodes}. It is a hidden structure of MRNG, which comes up very naturally during the construction process of MRNG. We will provide theoretical evidence why conflicting nodes can be useful for getting out of local minimum points in a graph-based ANN search algorithm on MRNG and generalized MRNG so that it can eventually improve the accuracy and time complexity of the search. We will also discuss some challenges and future directions about implementing conflicting nodes in practice. To the best of our knowledge, conflicting nodes have not been mentioned or utilized by any other ANN search related work so far.

Intuitively, a conflicting node $w$ of an edge $\overrightarrow{vu}$ in an MRNG is a node that failed to become a neighbor of $v$ due to $u$. When $v$ is a local minimum point with respect to some query, every neighbor $u$ of it is no closer to the query than $v$, so it would then make sense to search for true nearest neighbors among the nodes that did not get to become a neighbor of $v$ and are still relatively close to $v$. Figure~\ref{fig:conf_demo} shows how a conflicting node can be helpful to get around a local minimum node $v$ when projecting to a 2-dimensional space: Suppose $u, w \in S$ and $u$ is the closest to $v$ in $S$, so $\overrightarrow{vu}$ is an edge in the MRNG on $S$. Observe that $u \in lune(v, w)$, and thus $\overrightarrow{vw}$ is not an edge of the MRNG on $S$, meaning that a normal greedy search algorithm would not check $w$ when getting to $v$. We observe that $\delta(w, q) < \delta(v, q)$ in Figure~\ref{fig:conf_demo}, and we know that $w$ can be related to $v$ via the neighbor $u$ of $v$. Therefore, if we can record extra information of conflicting nodes like $w$ and their relationships to the neighbors of $v$, we can then have the
search algorithm to skip to $w$ when being stuck at $v$.

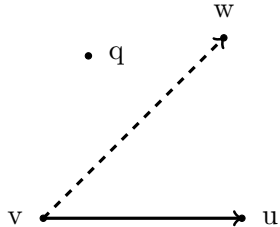
\begin{figure}[th]
    \centering
    \begin{tikzpicture} [scale = 1.2]
    \node[label=left:v](v) at (0,0){};
    \node[label=right:u](u) at (2.2,0){};
    \node[label=above:w](w) at (2,2){};
    \node[label=right:q](q) at (0.5, 1.8){};
    
    \filldraw (v) circle (1pt);
    \filldraw (u) circle (1pt);
    \filldraw (w) circle (1pt);
    \filldraw (q) circle (1pt);

    \draw[very thick, ->] (v.center) -- (u.center);
    \draw[dashed, very thick, ->] (v.center) -- (w.center);
    \end{tikzpicture}
    
    \caption{$q$ is the query; $v \in S$ is a local minimum node with respect to $q$; $u \in S$ is the nearest neighbor of $v$ in $S$; $w$ is a conflicting node to $\ora{vu}$ that is closer to $q$ than $v$.}
    \label{fig:conf_demo}
\end{figure}

Formally, we define local minimum nodes and conflicting nodes as follows.

\begin{definition}
In a directed graph $G$ with $V(G) \subseteq \mathbb{R}^d$, for any $v \in V(G)$ and a point $q \in \mathbb{R}^d$, $v$ is called a local minimum node with respect to $q$ and $G$ if $\delta(u, q) \geq \delta(v, q)$ for every $u \in N_{out}(v)$.
\end{definition}

\begin{definition}
For $S \subseteq \mathbb{R}^d$, let $G$ be the MRNG on $S$, and let $v, u \in S$ such that $\overrightarrow{vu} \in E(G)$. A node $w \in S$ is called a conflicting node of $\overrightarrow{vu}$ if $u \in lune(v, w)$. We use $C(\overrightarrow{vu})$ to denote the set of all conflicting nodes of $\overrightarrow{vu}$, i.e. $C(\overrightarrow{vu}) = \{w \in S: u \in lune(v, w)\}$.
\end{definition}

Due to the nature of an MRNG, we can simply derive the following observation, which says that when $v$ is a local minimum point with respect to a query $q$, if $v$ is not the true nearest neighbor for $q$, then the true nearest neighbor for $q$ must be a conflicting node of some edge on $v$.

\begin{observation}  \label{obs_conf_abs_NN}
Let $S \subseteq \mathbb{R}^d$ and $G$ be the MRNG on $S$. Let $v \in S$ and $q \in \mathbb{R}^d$ such that $v$ is a local minimum node with respect to $q$ and $G$. Let $w = \arg \min_{x \in S} \delta(x, q)$. If $w \neq v$, then $w \in C(\overrightarrow{vu})$ for some $u \in N_{out}(v)$.
\end{observation}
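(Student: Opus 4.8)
The plan is to funnel the whole argument through a single application of the MRNG edge criterion (Definition~\ref{def_MRNG}) to the ordered pair $(v,w)$. Since $w=\arg\min_{x\in S}\delta(x,q)$ lies in $S$ and $w\neq v$ by hypothesis, the directed pair $\overrightarrow{vw}$ is a legitimate candidate edge and $lune(v,w)$ is nondegenerate, so the criterion applies to it. Everything then reduces to establishing the single claim that $\overrightarrow{vw}\notin E(G)$.

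To prove that claim I would combine the local-minimum hypothesis on $v$ with the optimality of $w$. Because $v$ is a local minimum node with respect to $q$, every out-neighbor $u\in N_{out}(v)$ satisfies $\delta(u,q)\geq\delta(v,q)$. On the other hand, $w$ being the nearest neighbor of $q$ in $S$ and distinct from $v$ gives $\delta(w,q)<\delta(v,q)$. Hence $w$ is strictly closer to $q$ than any out-neighbor of $v$, so $w$ cannot itself be an out-neighbor of $v$; that is, $\overrightarrow{vw}\notin E(G)$.

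With the claim in hand, I would read Definition~\ref{def_MRNG} contrapositively: $\overrightarrow{vw}\in E(G)$ holds if and only if $\overrightarrow{vz}\notin E(G)$ for every $z\in lune(v,w)\cap S$. Since $\overrightarrow{vw}\notin E(G)$, this universally quantified condition must fail, so there exists some $z\in lune(v,w)\cap S$ with $\overrightarrow{vz}\in E(G)$. Setting $u:=z$, we obtain $u\in N_{out}(v)$ together with $u\in lune(v,w)$, and $u\in lune(v,w)$ is by definition exactly the assertion $w\in C(\overrightarrow{vu})$. This is the desired conclusion.

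The only delicate point — and the step I would guard most carefully — is the strict inequality $\delta(w,q)<\delta(v,q)$. If ties are allowed, one only gets $\delta(w,q)\leq\delta(v,q)$, and the degenerate case $\delta(w,q)=\delta(v,q)$ genuinely defeats the statement: there $\overrightarrow{vw}$ may belong to $E(G)$, in which case the MRNG criterion guarantees that $lune(v,w)\cap S$ contains no out-neighbor of $v$ at all, so $w\in C(\overrightarrow{vu})$ fails for every $u\in N_{out}(v)$. I would therefore make explicit that "$v$ is not the true nearest neighbor for $q$" is to be interpreted as $\delta(w,q)<\delta(v,q)$ (equivalently, that the nearest neighbor is unique, the general-position setting depicted in Figure~\ref{fig:conf_demo}), so that the claim $\overrightarrow{vw}\notin E(G)$ goes through.
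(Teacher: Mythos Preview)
Your proposal is correct and is exactly the natural argument the paper has in mind; the paper itself does not spell out a proof, simply saying ``Due to the nature of an MRNG, we can simply derive the following observation,'' and immediately afterward notes that $\bigcup_{y \in N_{out}(x)} C(\overrightarrow{xy})$ equals the set of all non-neighbors of $x$, which is precisely the fact your contrapositive application of Definition~\ref{def_MRNG} establishes. Your caveat about the tie case $\delta(w,q)=\delta(v,q)$ is a legitimate edge case the paper silently assumes away via general position.
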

Note that for every $x \in S$, $\bigcup_{y \in N_{out}(x)} C(\overrightarrow{xy}) = S - N_{out}(x) \cup \{x\}$ is the set of all non-neighbors of $x$, and it is not practical to search for all of them to find the true nearest neighbor. It turns out that given a query $q$ and a local minimum node $v$ with respect to $q$, with any data distribution, it is just impossible to have the true nearest neighbor for $q$ to be a conflict node of some neighbor of $v$. Figure~\ref{figure_adv_region} shows a good example for this in 2-dimension, in which $q$ is the query, $v$ is a local minimum point with respect to $q$, and $w$ is the true nearest neighbor of $q$ in $S$. Since $v$ is a local minimum point with respect to $q$, $v$ has no neighbor in the open ball centered at $q$ of radius $\delta(v, q)$ (within the red circle in Figure~\ref{figure_adv_region}). However, since $\overrightarrow{vw}$ is not an edge in the MRNG, there must exist some $u \in lune(v, w)$ that is a neighbor of $v$. Therefore, such a neighbor $u$ must be contained in $lune(v, w)$ but outside the open ball centered at $q$ of of radius $\delta(v, q)$ (red circle), precisely the shaded region in Figure~\ref{figure_adv_region}. This means that if there exists a node in $S$ at the position of $w$, then $v$ must have some neighbor in the shaded region.

\begin{figure}[th]
    \centering
    \begin{tikzpicture} [scale = 1.5]
        \draw[thin, ->,>=latex] (-1,0)--(2.3,0) node[above] {};
        \draw[thin, ->,>=latex] (0,-1)--(0,1.7) node[left] {};
        \node [label={[xshift=0.5cm, yshift=-0.7cm]v(0, 0)}] (v) at (0,0){};
        \node [label={[xshift=0.2cm, yshift=-0.7cm]q(1, 0)}] (q) at (1,0){};
        \node [label={[xshift=0.4cm, yshift=-0.7cm](2, 0)}] (v2) at (2,0){};
        \node [label={[xshift=0.2cm, yshift=-0.3cm]u}] (u) at (0.2,1){};
        \node [label={[xshift=0.2cm, yshift=-0.3cm]w}] (w) at (1.079,0.784){};
        \filldraw (v) circle (1pt);
        \filldraw (q) circle (1pt);
        \filldraw (v2) circle (1pt);
        \filldraw (u) circle (1pt);
        \filldraw (w) circle (1pt);
        \draw[red, domain=0:360,samples=500] plot (\x:{2*cos(\x)});
        \draw[dotted, domain=-45:120,samples=500] plot (\x:{4/3});
        \draw[dotted, domain=90:165,samples=500] plot (\x:{(8/3)*cos(\x - 36)});
        \draw[thick, ->] (v.center) -- (u.center);
        \draw[thick, dashed, ->] (v.center) -- (w.center);
        
        \draw[dotted, domain=47:90,samples=500, name path=A1] plot (\x:{4/3});
        \draw[red, domain=47:90,samples=500, name path=B1] plot (\x:{2*cos(\x)});
        \tikzfillbetween[of=A1 and B1]{gray, opacity=0.2};
        
        \draw[dotted, domain=90:96,samples=500, name path=A2] plot
        (\x:{4/3});
        \draw[white, opacity=0, name path=B2] (-0.139, 1.326) -- (0, 1.326) ;
        \tikzfillbetween[of=A2 and B2]{gray, opacity=0.2};
        
        \draw[dotted, domain=96:120,samples=500, name path=A3] plot (\x:{(8/3)*cos(\x - 36)});
        \draw[thin, name path=yaxis] (0,0)--(0,4/3);
        \tikzfillbetween[of=A3 and yaxis]{gray, opacity=0.2};
    \end{tikzpicture}
    \caption{$q$ is the query, $v$ is a local minimum point with respect to $q$. If a true nearest neighbor of $q$ in $S$ is at the position of $w$, then $v$ must have some neighbor $u$ contained in $lune(v, w)$ but outside the open ball centered at $q$ of of radius $\delta(v, q)$, precisely the shaded region.}
\label{figure_adv_region}
\end{figure}
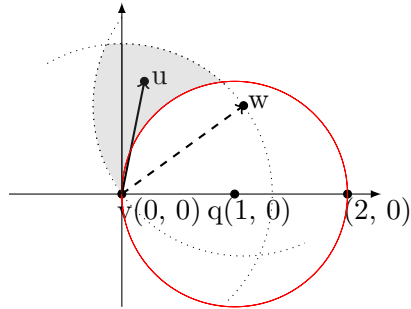

Figure~\ref{figure_adv_region} shows that to check if there exists a node at the position of $w$ in $S$, it suffices to check if $v$ has any neighbor in the shaded region. Let this region be $R_w$. Since every node in $S$ closer to $q$ than $v$ must be contained in the ball centered at $q$ of radius $\delta(v, q)$, it follows that if $v$ is not the global minimum of $q$, then $v$ must have some neighbor in the union of regions $R_w$ over all $w \in B_{\delta(v, q)}(q)$.

In the next lemma, we will give the precise form of this union of regions, and
 we will generalize this result to dimension $d$ for any arbitrary $d \geq 2$.

\begin{lemma}\label{lem_search_critiria}
Let $S \subseteq \mathbb{R}^d$ and $G$ be the MRNG on $S$. Let $v \in S$ and $q \in \mathbb{R}^d$ such that $v$ is a local minimum node with respect to $q$ in $G$. Let $u \in N_{out}(v)$ such that $u \not\in B_{\delta(v, q)}(q)$. Let $d = \delta(v, u)$ and $\theta = \angle qvu \in [0, \pi]$. Then, there exists some $w \in B_{\delta(v, q)}(q)$ such that $u \in lune(v, w)$ if and only if $d < \delta(v, q) \cdot f(\theta)$ where
\begin{equation*}
f(\theta) = 
\begin{cases}
2,  & \text{if  } \theta \in [0, \frac{1}{3}\pi]\\
2 \cos(\theta - \pi/3), & \text{if } \theta \in (\frac{1}{3}\pi, \frac{2}{3}\pi]\\
2(\cos \theta + 1), & \text{if } \theta \in (\frac{2}{3}\pi, \pi].
\end{cases}
\end{equation*}
\end{lemma}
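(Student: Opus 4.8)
The plan is to translate the geometric existence question into a closest-point computation, reduce it to the plane spanned by $q$ and $u$, and then analyze it by cases matching the three branches of $f$. First I would fix coordinates, placing $v$ at the origin, writing $R := \delta(v,q)$, $\hat q := (q-v)/R$, and $\hat u := (u-v)/d$ (so $q = R\hat q$ and $u = d\hat u$). Unwinding the definition $lune(v,w) = B_{\delta(v,w)}(v)\cap B_{\delta(v,w)}(w)$ with open balls, the condition $u \in lune(v,w)$ is equivalent to $\delta(v,u) < \delta(v,w)$ together with $\delta(u,w) < \delta(v,w)$; in coordinates these read $|w| > d$ and $w\cdot\hat u > d/2$ (the latter being the open half-space of points strictly closer to $u$ than to $v$). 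Combined with $w \in B_R(q)$, i.e. $\delta(w,q) < R$, the lemma asks exactly whether the open region $\{|w| > d\}\cap\{w\cdot\hat u > d/2\}\cap B_R(q)$ is nonempty. Writing $K := \{|w|\ge d\}\cap\{w\cdot\hat u\ge d/2\}$ for the closed version of the first two constraints, this holds if and only if $\min_{w\in K}\delta(w,q) < R$, so the whole statement reduces to computing this minimum and comparing it to $R$. I would also record the consequence of the hypothesis $u\notin B_R(q)$: expanding $\delta(u,q)\ge R$ gives $d \ge 2R\cos\theta$, which says $q$ lies on the $v$-side of the perpendicular bisector of $vu$ and pins down the active constraint in each regime.

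Next I would reduce to the plane $\Pi$ spanned by $\hat q$ and $\hat u$ (choosing any plane through the line $vu$ when $q,v,u$ are collinear). The key observation is that both the objective $\delta(w,q)^2 = |w|^2 - 2R\,(w\cdot\hat q) + R^2$ and the constraints depend on $w$ only through $|w|$, $w\cdot\hat q$, and $w\cdot\hat u$. For the orthogonal projection $w_\Pi$ of a feasible $w$ onto $\Pi$ one has $w_\Pi\cdot\hat q = w\cdot\hat q$, $w_\Pi\cdot\hat u = w\cdot\hat u$, and $|w_\Pi|\le|w|$, so $w_\Pi$ still satisfies the half-space constraint and does not increase $\delta(\cdot,q)$; hence the minimum over $K$ is attained on $\Pi$ provided the sphere constraint $|w|\ge d$ survives projection. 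When $|w_\Pi| < d$ I would instead rescale to $w' := d\,w_\Pi/|w_\Pi|$, which lands on the sphere, still satisfies $w'\cdot\hat u > d/2$, and, using that at any point relevant to $\min < R$ one has $w\cdot\hat q > 0$, does not increase the distance to $q$. This makes the problem genuinely two-dimensional.

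In the plane I would compute the closest point of $K$ to $q$ and show the minimum equals $R$ exactly on the threshold $d = R f(\theta)$, splitting into the three regimes that produce the three branches of $f$. For $\theta\in[0,\pi/3]$ the closest point is the point $d\hat q$ of the sphere $\{|w|=d\}$ in the direction of $q$, which lies in the half-space precisely because $\cos\theta\ge 1/2$; its distance to $q$ is $|d - R|$, giving threshold $d = 2R$ and $f\equiv 2$. For $\theta\in(\pi/3,2\pi/3]$ the minimizer sits on the edge $\{|w|=d\}\cap\{w\cdot\hat u = d/2\}$, i.e. at distance $d$ from $v$ and angle $\pi/3$ from $u$ toward $q$ (angle $\theta-\pi/3$ from $\hat q$); its squared distance is $d^2 + R^2 - 2dR\cos(\theta-\pi/3)$, giving threshold $d = 2R\cos(\theta-\pi/3)$. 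For $\theta\in(2\pi/3,\pi]$ the sphere constraint becomes inactive and the minimizer is the orthogonal projection of $q$ onto the bisector hyperplane $\{w\cdot\hat u = d/2\}$, at distance $d/2 - R\cos\theta$ (using $d\ge 2R\cos\theta$), giving threshold $d = 2R(1+\cos\theta)$. I would check that the active constraint switches exactly at $\theta=\pi/3$ and $\theta=2\pi/3$, where the candidate points coincide and $f$ is continuous, and that in each regime the claimed minimizer is valid (for instance the hyperplane projection satisfies $|w|\ge d$ only once $\theta>2\pi/3$). The \emph{if} direction can alternatively be made fully explicit by exhibiting the in-plane witness $w = t\hat q$, or $w = t(\cos(\theta-\pi/3),\sin(\theta-\pi/3))$, or a slight inward push of the hyperplane projection, for an appropriate range of $t$, and verifying $|w|>d$, $w\cdot\hat u>d/2$, and $\delta(w,q)<R$ directly.

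The main obstacle I anticipate is the reduction to the plane: because the feasible region involves the \emph{exterior} of a ball it is nonconvex, so the naive projection onto $\Pi$ can break the constraint $|w|\ge d$, and the rescaling fix must be justified carefully, in particular that it does not increase $\delta(\cdot,q)$, which is where the sign of $w\cdot\hat q$ enters. After that, the remaining work is the bookkeeping of the three cases and the strict-versus-nonstrict inequalities needed to pass between the closed-region minimum and the open-region existence.
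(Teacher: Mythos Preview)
Your proposal is correct and takes a genuinely different route from the paper. The paper parametrizes the candidate witness $w$ on the boundary circle $\partial B_{R}(q)\cap P$ by the angle $\alpha=\angle wvu$, first showing (Lemma~\ref{lem_angle}) that for $|\alpha|\le\pi/3$ the binding lune inequality is $\delta(v,w)>\delta(v,u)$ while for $|\alpha|\ge\pi/3$ it is $\delta(v,w)>\delta(w,u)$, and then computing $\sup_\alpha$ of the resulting $d$-bounds over the two ranges (Lemma~\ref{lem_bound} and Claim~\ref{claim_sup}), finally taking $f=\max\{g,h\}$. Your approach instead encodes ``$u\in lune(v,w)$'' as the closed region $K=\{|w|\ge d\}\cap\{w\cdot\hat u\ge d/2\}$ and reduces the existence question to whether $\min_{w\in K}\delta(w,q)<R$, then solves this convex-analytic nearest-point problem by identifying which constraint is active (sphere, bisecting hyperplane, or their intersection corner). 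The two in-plane reductions are also different: the paper pushes $w$ onto $\partial B_R(q)$ by projecting to the plane and extending along the line through $v$, whereas you push $w$ onto $\partial K$ by projecting to the plane and rescaling radially from $v$. Your route sidesteps the somewhat intricate trigonometric supremum computation in the paper's Claim~\ref{claim_sup}; the paper's route, in exchange, always hands you an explicit witness on the query-sphere boundary. The case split is the same in the end because your three active-constraint regimes correspond exactly to the paper's three angular regimes for the maximizing $\alpha$. One small point: your claim that rescaling ``does not increase the distance to $q$'' is in fact true (using $|w|\ge d>|w_\Pi|$ and $w\cdot\hat q>0$ one gets $\delta(w',q)^2-\delta(w,q)^2=(d^2-|w|^2)-2R(w\cdot\hat q)(d/|w_\Pi|-1)<0$), so the plane reduction goes through exactly as you say.
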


We can visualize Lemma~\ref{lem_search_critiria} in 2-dimension in Figure~\ref{figure_heart}: Without loss of generality, we place the query $q$ at $(1, 0)$ and the local minimum point $v$ with respect to $q$ at the origin. Since $v$ is a local minimum point with respect to $q$, $v$ does not have any neighbor inside the unit ball centered at $q$, which is region inside the circle in red in Figure~\ref{figure_heart}. If $v$ is not the global minimum, by Lemma~\ref{lem_search_critiria}, the global minimum $w$ must be a conflicting node to some $\overrightarrow{vu}$ where $u$ is contained inside the blue closed curve but outside the red circle, the shaded region in gray. In other words, to search for nodes closer to $q$ then $v$, Lemma~\ref{lem_search_critiria} says that it suffices to check conflicting nodes to edges $\overrightarrow{vu'}$ for $u'$ contained in the gray shaded region.
The proof of Lemma~\ref{lem_search_critiria} is quite technical, and we leave it to the appendix of this paper. Now with Lemma~\ref{lem_search_critiria}, we can then derive the following Corollary~\ref{cor_heart}.

\begin{figure}[th]
\centering
\begin{tikzpicture}[scale = 1.1]
    \draw[thin, ->,>=latex] (-1,0)--(2.5,0) node[above] {};
    \draw[thin, ->,>=latex] (0,-2)--(0,2) node[left] {};
    \node [label={[xshift=0.3cm, yshift=-0.7cm]v(0, 0)}] (v) at (0,0){};
    \node [label={[xshift=0.3cm, yshift=-0.7cm]q(1, 0)}] (q) at (1,0){};
    \node [label={[xshift=0.4cm, yshift=-0.7cm](2, 0)}] (u) at (2,0){};
    \filldraw (v) circle (1pt);
    \filldraw (q) circle (1pt);
    \filldraw (u) circle (1pt);
  
    \draw[thick, red, domain=0:360,samples=500] plot (\x:{2*cos(\x)});
    \draw[thick, blue, domain=0:60,samples=500, name path=A1] plot (\x:{2});                           
    \draw[thick, blue, domain=60:90,samples=500, name path=A2_right] plot (\x:{2*cos(\x - 60)});  
    \draw[thick, blue, domain=90:120,samples=500, name path=A2_left] plot (\x:{2*cos(\x - 60)});  
    \draw[thick, blue, domain=120:180,samples=500, name path=A3] plot (\x:{2*(cos(\x)+1)});   
    \draw[thick, red, domain=0:60,samples=500, name path=B1] plot (\x:{2*cos(\x)});
    \draw[thick, red, domain=60:90,samples=500, name path=B2_right] plot
(\x:{2*cos(\x)});
    \draw[thin, name path=yaxis] (0,0)--(0,1.732);
    \tikzfillbetween[of=A1 and B1]{gray, opacity=0.2};
    \tikzfillbetween[of=A2_right and B2_right]{gray, opacity=0.2};
    \tikzfillbetween[of=A2_left and yaxis]{gray, opacity=0.2};
    \tikzfillbetween[of=A3 and yaxis]{gray, opacity=0.2};
    
    \begin{scope}[yscale=-1,xscale=1]
        \draw[thick, blue, domain=0:60,samples=500, name path=A1] plot (\x:{2});               
        \draw[thick, blue, domain=60:90,samples=500, name path=A2_right] plot (\x:{2*cos(\x - 60)});  
        \draw[thick, blue, domain=90:120,samples=500, name path=A2_left] plot (\x:{2*cos(\x - 60)});  
        \draw[thick, blue, domain=120:180,samples=500, name path=A3] plot (\x:{2*(cos(\x)+1)});   
        \draw[thick, red, domain=0:60,samples=500, name path=B1] plot (\x:{2*cos(\x)});
        \draw[thick, red, domain=60:90,samples=500, name path=B2_right] plot (\x:{2*cos(\x)});
        \draw[thin, name path=yaxis] (0,0)--(0,1.732);
        \tikzfillbetween[of=A1 and B1]{gray, opacity=0.2};
        \tikzfillbetween[of=A2_right and B2_right]{gray, opacity=0.2};
        \tikzfillbetween[of=A2_left and yaxis]{gray, opacity=0.2};
        \tikzfillbetween[of=A3 and yaxis]{gray, opacity=0.2};
    \end{scope}
\end{tikzpicture}
\caption{Demonstration of Lemma~\ref{lem_search_critiria} in 2-dimension}
\label{figure_heart}
\end{figure}
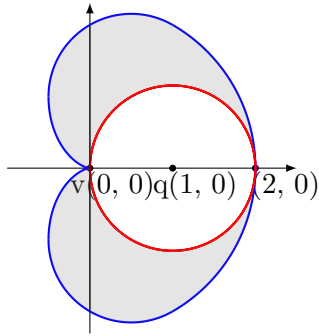

\begin{corollary}[Corollary of Lemma~\ref{lem_search_critiria}] \label{cor_heart}
Let $S \subseteq \mathbb{R}^d$ and $G$ be the MRNG on $S$. Let $v \in S$ and $q \in \mathbb{R}^d$ such that $v$ is a local minimum node with respect to $q$ in $G$. Let $u \in N_{out}(v)$. If $\delta(v, u) \geq \delta(v, q) \cdot f(\theta)$ where $f$ is as defined in Lemma~\ref{lem_search_critiria}, then $\delta(w, q) \geq \delta(v, q)$ for every $w \in C(\overrightarrow{vu})$.
\end{corollary}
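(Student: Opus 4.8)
The plan is to obtain \Cref{cor_heart} as an immediate logical consequence — essentially the contrapositive — of Lemma~\ref{lem_search_critiria}, since all of the genuine geometric work has already been carried out in proving that lemma. First I would verify that the hypotheses of Lemma~\ref{lem_search_critiria} are in force for the given out-neighbor $u$. Because $v$ is a local minimum node with respect to $q$, every $u \in N_{out}(v)$ satisfies $\delta(u,q) \geq \delta(v,q)$, which is exactly the condition $u \notin B_{\delta(v,q)}(q)$ required by the lemma. Thus, writing $d := \delta(v,u)$ and $\theta := \angle qvu$, the lemma applies to $u$ verbatim.

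Next I would translate the two sets appearing in the corollary into the language of the lemma. By the definition of a conflicting node, $w \in C(\overrightarrow{vu})$ is equivalent to $u \in lune(v,w)$; and $\delta(w,q) < \delta(v,q)$ is equivalent to $w \in B_{\delta(v,q)}(q)$. Hence the failure of the desired conclusion — namely the existence of some $w \in C(\overrightarrow{vu})$ with $\delta(w,q) < \delta(v,q)$ — is precisely the existence of a point $w \in S \cap B_{\delta(v,q)}(q)$ satisfying $u \in lune(v,w)$.

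I would then argue by contraposition. Suppose such a $w$ existed. Since $S \subseteq \mathbb{R}^d$, this $w$ is in particular a point of $B_{\delta(v,q)}(q) \subseteq \mathbb{R}^d$ with $u \in lune(v,w)$, so the existential side of the biconditional in Lemma~\ref{lem_search_critiria} holds. The ``only if'' direction of the lemma then forces $d < \delta(v,q)\cdot f(\theta)$, that is, $\delta(v,u) < \delta(v,q)\cdot f(\theta)$, which directly contradicts the hypothesis $\delta(v,u) \geq \delta(v,q)\cdot f(\theta)$. Therefore no such $w$ exists, and every $w \in C(\overrightarrow{vu})$ satisfies $\delta(w,q) \geq \delta(v,q)$, as claimed.

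The only point requiring care — and the nearest thing to an obstacle — is the bookkeeping of quantifiers and set containments: Lemma~\ref{lem_search_critiria} quantifies $w$ over \emph{all} points of $\mathbb{R}^d$ inside the ball, whereas the corollary restricts attention to $w \in C(\overrightarrow{vu}) \subseteq S$. This restriction is harmless in the direction we need, because a conflicting node lying in the ball is \emph{a fortiori} a point of $\mathbb{R}^d$ in the ball satisfying the lune condition, and so it still activates the existential hypothesis of the lemma. No new geometry or computation is needed beyond what Lemma~\ref{lem_search_critiria} already supplies.
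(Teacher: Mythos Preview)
Your proposal is correct and matches the paper's treatment: the paper states \Cref{cor_heart} as an immediate corollary of Lemma~\ref{lem_search_critiria} without a separate proof, and your contrapositive argument is exactly the intended derivation. Your explicit check that $u \notin B_{\delta(v,q)}(q)$ (from the local-minimum hypothesis) and your remark about the harmless passage from $w \in S$ to $w \in \mathbb{R}^d$ are both appropriate and make the deduction airtight.
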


We can also make the following simple observation by triangle inequality.
\begin{observation} \label{obs_w_heart}
Let $S \subseteq \mathbb{R}^d$ and $G$ be the MRNG on $S$. Let $v \in S$ and $q \in \mathbb{R}^d$ such that $v$ is a local minimum node with respect to $q$ in $G$. If there exists some $w \in S$ such that $\delta(w, q) < \delta(v, q)$, then $\delta(v, w) < \delta(v, q) + \delta(w, q) < 2 \delta(v, q)$.
\end{observation}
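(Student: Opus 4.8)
This observation is essentially a one-line consequence of the triangle inequality, so the plan is simply to verify the two inequalities in the stated chain separately and then concatenate them. First I would apply the triangle inequality to the three points $v$, $q$, $w$ to obtain $\delta(v,w) \le \delta(v,q) + \delta(q,w) = \delta(v,q) + \delta(w,q)$, which gives the left inequality. Next, using only the hypothesis $\delta(w,q) < \delta(v,q)$, I would add $\delta(v,q)$ to both sides to get $\delta(v,q) + \delta(w,q) < \delta(v,q) + \delta(v,q) = 2\,\delta(v,q)$, which is the right inequality. Chaining the two displays yields $\delta(v,w) < \delta(v,q) + \delta(w,q) < 2\,\delta(v,q)$, the claim.

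The argument carries no real difficulty, and notably the local-minimum property of $v$ is not needed for the bound itself; all that is used is the existence of a point $w$ that is strictly closer to $q$ than $v$. The only point deserving a word of care is the strictness of the left inequality: the triangle inequality in the $l_2$ metric is an equality exactly when $q$ lies on the closed segment $\overline{vw}$, so the strict sign holds provided $q$ is not collinear strictly between $v$ and $w$ (the generic situation for point data, and in any case immaterial to the intended use of this observation, where the operative estimate is the strict bound $\delta(v,w) < 2\,\delta(v,q)$ coming from the second step). Apart from this remark, there is no obstacle: the strictness of the right inequality follows directly and unconditionally from the hypothesis $\delta(w,q) < \delta(v,q)$.
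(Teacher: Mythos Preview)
Your proposal is correct and matches the paper's own treatment: the paper simply states the observation and remarks that it follows ``by triangle inequality,'' with no further proof given. Your added remark about the strictness of the left inequality is a careful point the paper glosses over; as you note, the operative bound $\delta(v,w) < 2\,\delta(v,q)$ follows unconditionally from $\delta(v,w) \le \delta(v,q) + \delta(w,q) < 2\,\delta(v,q)$.
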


\begin{algorithm}
\caption{conflict-search($S, G, v, q$)}
\textbf{Require:} dataset $S \subseteq \mathbb{R}^d$, MRNG $G$ on $S$ with $C(\overrightarrow{xy})$ for every $\overrightarrow{xy} \in E(G)$, query $q \in V(G)$, local minimum node $v \in S$ with respect to $q$ in $G$\\
\textbf{Ensure:} true nearest neighbor of $q$ in $S$

\begin{algorithmic}[1]\label{Alg_conf_search}
\STATE Set $x:= v$
\STATE Compute $r := \delta(v, q)$
\FOR{$u \in N_{out}(v)$ such that $\delta(v, u) < r \cdot f(\theta)$ where $\theta = \angle qvu$}
    \FOR{$w \in D(\overrightarrow{vu})$ such that $\delta(v, w) < 2 r$}
        \IF{$\delta(w, q) < \delta(v, q)$}
            \STATE $x := w$
        \ENDIF
    \ENDFOR
\ENDFOR
\STATE \textbf{return} $x$
\end{algorithmic}
\end{algorithm}

By Corollary~\ref{cor_heart} and Observation~\ref{obs_w_heart}, it follows that at a local minimum node $v$ with respect to $q$, to find the true nearest neighbor for $q$, it is enough to check nodes $w$ that is a conflicting node for some $\overrightarrow{vu}$ such that $\delta(v, u) < \delta(v, q) \cdot f(\theta)$ and $\delta(v, w) < 2 \delta(v, q)$ where $\theta = \angle qvu$ and $f$ is as defined in Lemma~\ref{lem_search_critiria}. Therefore, Algorithm~\ref{Alg_conf_search} finds a true nearest neighbor of $q$ for every given $q$.

Although conflicting nodes seem to be a great tool to \emph{get out of local minimum points}, we have to mention that it is not so clear how to utilize them to run the search efficiently in practice. We believe that there are many interesting future directions on implementing conflicting nodes efficiently such that the ANN search on generalized MRNG can be improved. We discuss these future directions with more details in the appendix.

\section{Conclusion}

We proved that monotonicity gives great navigability for graph-based ANN search and showed that Monotonic Relative Neighborhood Graphs (MRNG) are well-defined edge-minimal monotonic graphs, using mathematical proofs. We formalized two methods of generalizing MRNG, by setting a degree upper bound and restricting a candidate pool from which the algorithm selects neighbors for each node. These generalizations define a parameter space for proximity graphs for ANN search rigorously, and graphs in this parameter space can be used for ANN search in a large scale. We also explored the degree distribution of MRNG and compared the search accuracy in degree-bounded MRNG with the search accuracy in MRNG. We found out that a degree-bounded MRNG with suitable parameters, although not perfectly monotonic, achieves very similar search accuracy like MRNG and is significantly more efficient to be constructed and to run search on. Finally, we shared our thoughts on another perspective of generalizing MRNG, which is to use conflicting nodes. We proved in theory that conflicting nodes can be used to get around local minimum points when searching on an MRNG and mentioned several future directions on implementing conflicting nodes in practice.
\section{Acknowledgement}
The research presented in this paper was partially conducted during the visit of the first author to Microsoft, supported by the NSF supplemental funding of Prof. Robin Thomas at Georgia Tech. The first author thanks Yuxiong He and Junhua Wang at Microsoft for their kind hospitality and Robin Thomas for his generous support. Both authors thank Niranjan Uma Naresh for helpful discussions. 

{\small
\bibliographystyle{ieee_fullname}
\bibliography{reference}
}

\newpage
\appendix
\section{Appendix}
\subsection{Proof of Lemma~\ref{lemma_monotonic}}
\begin{proof}[Proof of Lemma~\ref{lemma_monotonic}]
In every iteration in the while loop, if $q \not\in N_{out}(v_i)$, then by the monotonicity of $G$ we know that there exists a monotonic path going from $v_i$ to $q$. It follows that there exists some $v_{i+1} \in N_{out}(v_i)$ such that $\delta(v_{i+1}, q) < \delta(v_i, q)$. Therefore, the while loop does not terminate until $v_i = q$ for some $i$, and we know that it must terminate since there are only finitely many nodes in $G$.
\end{proof}

\subsection{Proof of Lemma~\ref{lem_MRNG}}
To prove Lemma~\ref{lem_MRNG} that MRNG is a uniquely defined edge-minimal monotonic graph on any given dataset, it suffices to prove Lemmas~\ref{lem_MRNG_mono}, \ref{lem_MRNG_unique}, and \ref{lem_MRNG_min}.

\begin{lemma} \label{lem_MRNG_mono}
Let $S \subseteq \mathbb{R}^d$ and let $G$ be an MRNG on $S$. For every $p, q \in S$, Algorithm~\ref{alg_closer_and_go} closer-and-go($G, p, q$) finds a monotonic path from $p$ to $q$. Therefore, $G$ is a monotonic graph.
\end{lemma}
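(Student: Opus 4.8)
The plan is to prove the stronger, purely local statement that the greedy walk never stalls before reaching $q$: for every $v \in S$ with $v \neq q$, there exists an out-neighbor $u \in N_{out}(v)$ with $\delta(u, q) < \delta(v, q)$. Once this local claim is established, the lemma follows immediately, because closer-and-go (Algorithm~\ref{alg_closer_and_go}) always moves to such a $u$, so the distance to $q$ strictly decreases at every step; since $S$ is finite the walk cannot revisit a node and must halt, and by the local claim the only node at which it can halt is $q$ itself. The sequence of visited nodes is then monotonic by construction.

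To prove the local claim, I would fix $v \neq q$ and examine the potential edge $\overrightarrow{vq}$. If $\overrightarrow{vq} \in E(G)$, then $u = q$ is an out-neighbor with $\delta(q, q) = 0 < \delta(v, q)$, and we are done. Otherwise $\overrightarrow{vq} \notin E(G)$, and I would invoke the contrapositive of the MRNG edge condition in Definition~\ref{def_MRNG}: since the direct edge is absent, there must exist some $z \in lune(v, q) \cap S$ with $\overrightarrow{vz} \in E(G)$. By definition $lune(v, q) = B_{\delta(v,q)}(v) \cap B_{\delta(v,q)}(q)$, so in particular $z \in B_{\delta(v,q)}(q)$, which gives $\delta(z, q) < \delta(v, q)$. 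Thus $u = z$ is an out-neighbor of $v$ that is strictly closer to $q$, proving the claim. (Note that $z \neq v$, since $v \notin B_{\delta(v,q)}(q)$, so $z$ is a genuine neighbor of $v$ rather than $v$ itself.)

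The argument is short, and the main point to get right is simply reading the MRNG definition correctly in its contrapositive form: the non-existence of $\overrightarrow{vq}$ forces a blocking out-neighbor $z$ inside $lune(v, q)$, and membership in the half of the lune centered at $q$ automatically makes $z$ closer to $q$ than $v$ is. The only remaining thing to verify is termination, which is immediate from the strict monotone decrease of $\delta(\cdot, q)$ over the finite vertex set $S$. Since $p$ and $q$ were arbitrary, exhibiting a monotonic path from $p$ to $q$ shows that $G$ is a monotonic graph, completing the proof.
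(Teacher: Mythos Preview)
Your proof is correct and follows essentially the same approach as the paper: both use the contrapositive of the MRNG edge condition to show that at any node $v \neq q$ there is an out-neighbor in $lune(v,q)$, hence strictly closer to $q$. The only cosmetic difference is that the paper phrases this as a contradiction at the terminal node of the walk, whereas you state the local progress property for all $v$ up front and then invoke finiteness for termination.
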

\begin{proof}[Proof of Lemma~\ref{lem_MRNG_mono}]
Suppose in Algorithm~\ref{alg_closer_and_go}, closer-and-go($G, p, q$) returns a path going through $p = v_1$, $v_2, ..., v_l$. It then suffices to prove that $v_l = q$. For the sake of a contradiction, assume $v_l \neq q$. Then since the algorithm terminates at $v_l$, we know $q \not\in N_{out}(v_l)$. Since $G$ is an MRNG, it follows that there exists some $u \in lune(v_l, q)$ such that $\overrightarrow {v_l u} \in E(G)$. It follows that $\delta(u, q) < \delta(v_l, q)$, a contradiction to the fact that the algorithm terminates at $v_l$.
\end{proof}

\begin{lemma} \label{lem_MRNG_unique}
Let $S \subseteq \mathbb{R}^d$. Then, there is a unique MRNG on $S$.
\end{lemma}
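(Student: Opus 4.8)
The plan is to show that the defining biconditional in Definition~\ref{def_MRNG}, although self-referential, actually determines each out-edge unambiguously, so that the edge set is forced. The crucial structural observation is that for any $x, y \in S$, every $z \in lune(x, y)$ satisfies $\delta(x, z) < \delta(x, y)$, since $lune(x, y) \subseteq B_{\delta(x, y)}(x)$. Consequently, whether $\overrightarrow{xy}$ belongs to $E(G)$ depends only on the status of edges $\overrightarrow{xz}$ for points $z$ that are \emph{strictly closer} to $x$ than $y$ is. This removes the apparent circularity and lets me reason about the out-edges of each source node $x$ in isolation, by induction on distance from $x$.

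First I would fix a source node $x \in S$ and order the remaining points by increasing distance $\delta(x, \cdot)$. Ties are harmless: if $\delta(x, y) = \delta(x, y')$ then neither of $y, y'$ lies in the other's lune, because both lunes exclude points at distance at least $\delta(x, y)$ from $x$, so the membership conditions for $\overrightarrow{xy}$ and $\overrightarrow{xy'}$ never reference each other. For the base case, the nearest point $y_1$ has $lune(x, y_1) \cap S = \emptyset$: any point of the lune would be strictly closer to $x$ than $y_1$, contradicting minimality, and $x$ itself lies outside the lune because $\delta(y_1, x) = \delta(x, y_1)$ is not strictly smaller than $\delta(x, y_1)$. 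Hence the biconditional forces $\overrightarrow{xy_1} \in E(G)$. For the inductive step, assuming the membership of $\overrightarrow{xz}$ is already determined for every $z$ with $\delta(x, z) < \delta(x, y)$, and noting that $lune(x, y) \cap S$ consists only of such strictly-closer points, the biconditional determines whether $\overrightarrow{xy} \in E(G)$ from data already fixed.

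This induction simultaneously yields existence and uniqueness. Uniqueness follows because any two MRNGs on $S$ must make the same forced choice at every step, hence have identical out-neighborhoods at every $x$, and therefore the same edge set. Existence follows by reading the induction as a construction: define $E(G)$ by including $\overrightarrow{xy}$ exactly when no already-included $\overrightarrow{xz}$ has $z \in lune(x, y)$, and verify that the result satisfies Definition~\ref{def_MRNG}. Here the verification is immediate, precisely because $lune(x, y) \cap S$ contains only points closer to $x$, so the constructed ``closer-points-only'' condition coincides verbatim with the definition's condition quantified over all of $lune(x, y) \cap S$.

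I expect the main obstacle to be articulating the well-foundedness of the induction cleanly, that is, making fully rigorous that the dependency relation ``$\overrightarrow{xy}$ depends on $\overrightarrow{xz}$'' only ever points from a node to strictly closer nodes, and that equidistant points are genuinely independent. Once the containment $lune(x, y) \subseteq B_{\delta(x, y)}(x)$ is invoked to establish strict closeness, the rest is a routine induction; the only remaining care is the bookkeeping around ties and the observation that $x$ itself never lies in $lune(x, y)$.
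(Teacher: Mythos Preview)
Your proof is correct and rests on the same structural insight as the paper's: since $lune(x,y) \subseteq B_{\delta(x,y)}(x)$, the defining biconditional for $\overrightarrow{xy}$ only references out-edges $\overrightarrow{xz}$ with $\delta(x,z) < \delta(x,y)$, so the self-reference is well-founded along distance from $x$.

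The packaging differs. The paper argues by contradiction: assume two MRNGs $G_1, G_2$ differ, pick an edge $\overrightarrow{xy}$ in their symmetric difference of minimal length $\delta(x,y)$, note that all shorter out-edges from $x$ already agree, and conclude that the defining condition (which only looks at those shorter edges) forces $G_1$ and $G_2$ to agree on $\overrightarrow{xy}$ after all. You instead run a direct induction on distance from each source $x$, showing that every out-edge is determined in turn. These are the standard two ways of exploiting a well-founded dependency relation, and they are equivalent in content. Your version has the small bonus that it explicitly yields existence as well as uniqueness in one pass; the paper's Lemma~\ref{lem_MRNG_unique} treats only uniqueness and relies implicitly on Algorithm~\ref{alg_build_graph} (with $m=n$, $U_x = S\setminus\{x\}$) for existence. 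Your handling of ties and of the point $x$ itself is also more careful than the paper's sketch.
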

\begin{proof}[Proof of Lemma~\ref{lem_MRNG_unique}]
For the sake of a contradiction, assume that $G_1, G_2$ are both MRNGs on $S$ and that $E(G_1) \neq E(G_2)$. Without loss of generality, there exists some $\overrightarrow{xy} \in E(G_1) - E(G_2)$ such that $\{e \in E(G_1): |e| < \delta(x, y)\} = \{e \in E(G_2): |e| < \delta(x, y)\}$. Let $A_i = \{z \in S: \overrightarrow{xz} \in E(G_i), \delta(x, z) < \delta(x, y)\}$ for $i = 1, 2$. It follows that $A_1 = A_2$. Then, by Definition~\ref{def_MRNG}, $\overrightarrow{xy} \in E(G_i)$ if and only if $A_i = \emptyset$, meaning that $\overrightarrow{xy} \in E(G_1)$ if and only if $\overrightarrow{xy} \in E(G_2)$, a contradiction to the assumption that $\overrightarrow{xy} \in E(G_1) - E(G_2)$.
\end{proof}

\begin{lemma} \label{lem_MRNG_min}
Let $S \subseteq \mathbb{R}^d$ and let $G$ be the MRNG on $S$. Then, $G$ is an edge-minimal monotonic graph, i.e. for every $e \in E(G)$, $G \backslash e$ is not monotonic.
\end{lemma}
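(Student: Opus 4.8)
The plan is to use, for the deleted edge $e=\overrightarrow{xy}$, the pair $(x,y)$ together with the target query $q=y$, and to show that $G\setminus e$ has no monotonic path from $x$ to $y$. The whole argument reduces to one structural fact about the MRNG: among all out-neighbors of $x$, the vertex $y$ is the only one that is strictly closer to $y$ than $x$ itself, i.e. every $v\in N_{out}(x)$ with $v\neq y$ satisfies $\delta(v,y)\ge \delta(x,y)$. Granting this, once $\overrightarrow{xy}$ is removed every remaining out-neighbor $v$ of $x$ has $\delta(v,y)\ge\delta(x,y)$, so $x$ is a local minimum node with respect to $q=y$ in $G\setminus e$. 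Since $x\neq y$, a monotonic path from $x$ to $y$ would have to make a first step to some out-neighbor strictly closer to $y$, which is now impossible; hence no monotonic path from $x$ to $y$ survives, and $G\setminus e$ is not monotonic.

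To prove the structural fact I would argue by contradiction, applying Definition~\ref{def_MRNG} twice. Suppose some $v\in N_{out}(x)$ has $v\neq y$ and $\delta(v,y)<\delta(x,y)$. Because $\overrightarrow{xy}\in E(G)$, the MRNG condition forbids any out-neighbor of $x$ inside $lune(x,y)=B_{\delta(x,y)}(x)\cap B_{\delta(x,y)}(y)$; since $v$ already lies in $B_{\delta(x,y)}(y)$, the only escape is $\delta(x,v)\ge\delta(x,y)$, and in fact $\delta(x,v)>\delta(x,y)$ (the equality case is addressed below). But then $y\in lune(x,v)$: indeed $\delta(x,y)<\delta(x,v)$ and $\delta(v,y)<\delta(x,y)<\delta(x,v)$ place $y$ inside both defining balls. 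Now $\overrightarrow{xv}\in E(G)$ together with the MRNG condition forbids any out-neighbor of $x$ in $lune(x,v)$, yet $y$ is exactly such an out-neighbor, a contradiction. Note that this direction does not even invoke the monotonicity of $G$ from Lemma~\ref{lem_MRNG_mono}; it is purely the defining lune property applied to the two edges $\overrightarrow{xy}$ and $\overrightarrow{xv}$.

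The one delicate point -- and the main obstacle -- is the tie $\delta(x,v)=\delta(x,y)$, where both $v$ and $y$ sit on the boundary spheres rather than strictly inside, so neither lies in the open lune of the other and the contradiction above stalls. I would resolve this exactly as the rest of the development implicitly does, by working in general position (distinct pairwise distances), under which the tie cannot occur and the argument is complete. If one insists on allowing ties, the offending configuration is still excluded by the tie-breaking built into Algorithm~\ref{alg_build_graph}: when $v$ and $y$ are equidistant from $x$, the node processed second fails the test $\delta(x,\cdot)<\delta(r,\cdot)$ against the one processed first (since $\delta(v,y)<\delta(x,y)$), so $x$ never acquires both $\overrightarrow{xy}$ and an equal-distance out-neighbor that is closer to $y$. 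Either way the structural fact holds, completing the proof that deleting any edge of $G$ destroys monotonicity.
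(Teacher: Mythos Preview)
Your proof is correct and follows essentially the same route as the paper's: both apply the MRNG lune condition to $\overrightarrow{xy}$ to force any offending out-neighbor $v$ outside $lune(x,y)$, then observe $y\in lune(x,v)$ and derive a contradiction from the lune condition for $\overrightarrow{xv}$. Your explicit isolation of the equal-distance tie case is a nice piece of extra care that the paper's proof glosses over (indeed the lemma as literally stated can fail without a general-position assumption), but otherwise the arguments coincide.
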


\begin{proof}[Proof of Lemma~\ref{lem_MRNG_min}]
Let $e \in E(G)$, and say $e = \overrightarrow{xy}$ for $x, y \in S$. Let $G' = G \backslash e$. It suffices to show that $G'$ is not monotonic. For the sake of a contradiction, assume $G'$ is monotonic, and therefore there is a monotonic path $P$ from $x$ to $y$ in $G'$ that goes through $x = v_1, v_2, ..., v_l = y$ in order. Since $\overrightarrow{xy} \not\in E(G')$, we know $l \geq 3$ and $x, v_2, y$ are three distinct vertices. It follows that $\delta(x, y) > \delta(v_2, y)$ and therefore $v_2 \in B_{\delta(x, y)}(y)$. Since $G$ is an MRNG on $S$, $\overrightarrow{xy}
\in E(G)$ implies that there is no $z \in lune(x, y) \cap S$ such that $\overrightarrow{xz} \in E(G)$. Since $\overrightarrow{xv_2} \in E(G)$, we know that $v_2 \in B_{\delta(x, y)}(y) - lune(x, y)$, and therefore $\delta(x, v_2) > \delta(x, y)$. Since $\delta(x, y) > \delta(v_2, y)$, we have $\delta(x, v_2) > \max \{\delta(x, y), \delta(v_2, y)\}$, meaning that $y \in lune(x, v_2) \cap S$. This then implies that $\overrightarrow{x v_2}$ should not be an edge in $G$ due to the MRNG edge selection critiria, a contradiction.
\end{proof}

\subsection{Proof of Lemma~\ref{lem_search_critiria}}
We now give our complete proof of Lemma~\ref{lem_search_critiria}, which is restated as follows.\\
\textbf{Lemma~\ref{lem_search_critiria}.} \textit{Let $S \subseteq \mathbb{R}^d$ and $G$ be the MRNG on $S$. Let $v, u \in S$ and $q \in \mathbb{R}^d$ such that $v$ is a local minimum node with respect to $q$ in $G$ and $u \in N_{out}(v)$. Let $r = \delta(v, q)$, $d = \delta(v, u)$, and without loss of generality let $\theta = \angle qvu \in [0, \pi]$. Then, there exists some $w \in B_r(q)$ such that $u \in lune(v, w)$ if and only if $d < r \cdot f(\theta)$ where
\begin{equation*}
f(\theta) = 
\begin{cases}
2,  & \text{if  } \theta \in [0, \frac{1}{3}\pi]\\
2 \cos(\theta - \pi/3), & \text{if } \theta \in (\frac{1}{3}\pi, \frac{2}{3}\pi]\\
2(\cos \theta + 1), & \text{if } \theta \in (\frac{2}{3}\pi, \pi].
\end{cases}
\end{equation*}}

\begin{proof} [Proof of Lemma~\ref{lem_search_critiria}]
For convenience, let $B = B_r(q)$, and we use $\partial(B)$ to denote the boundary of $B$. Note that either $u, v, q$ are collinear or not. In both cases, we can find a 2-dimensional plane that contains all three of them. Let $P$ denote this plane.

The outline of the proof of Lemma~\ref{lem_search_critiria} is as follows. We will first prove Lemma~\ref{lem_redu_dim} which reduces the problem from $d$-dimension ($d \geq 2$ is arbitrary) to the boundary of $B$ in 2-dimension by proving that there exists $w \in B$ such that $u \in lune(v, w)$ if and only if there exists some $w' \in P \cap \partial(B)$ such that $u \in lune(v, w')$. We will then prove Lemma~\ref{lem_angle} and Lemma~\ref{lem_bound}, which link the reduced problem in 2-dimension to the bound $r \cdot f(\theta)$.

\begin{lemma} \label{lem_redu_dim}
 There exists some $w \in B$ such that $u \in lune(v, w)$ if and only if there exists some $w' \in P \cap \partial(B)$ such that $u \in lune(v, w')$.
\end{lemma}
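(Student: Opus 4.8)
The plan is to recast lune membership as two metric inequalities and then exploit a rotational symmetry of the resulting region. Place $v$ at the origin. Writing $lune(v,w) = B_{\delta(v,w)}(v) \cap B_{\delta(v,w)}(w)$, the condition $u \in lune(v,w)$ is equivalent to the conjunction $\delta(v,w) > \delta(v,u)$ and $\delta(u,w) < \delta(v,w)$. Hence the set
\[
M = \{w : u \in lune(v,w)\}
\]
is the intersection of the exterior of the closed ball $\overline{B_{\delta(v,u)}(v)}$ with the open half-space $H = \{w : \delta(w,u) < \delta(w,v)\}$ bounded by the perpendicular bisector hyperplane of the segment $vu$. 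Both defining conditions depend only on the distances from $w$ to $v$ and to $u$, so $M$ is invariant under every rotation about the line $\ell$ through $v$ and $u$; that is, $M$ is a solid of revolution with axis $\ell$. The crucial structural fact I would use is that $P$ contains $v$, $u$ and $q$, so it contains both the axis $\ell$ and the center $q$ of $B$. In this language the lemma asserts that $M \cap B$ is nonempty if and only if $M \cap P \cap \partial(B)$ is nonempty.

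The reverse implication is the easy one, since membership in $M$ is an open condition (two strict inequalities). If $w' \in M \cap P \cap \partial(B)$, then taking $w = w' + \varepsilon(q - w')$ for small $\varepsilon > 0$ keeps $w \in M$ by openness, while $\delta(w,q) = (1-\varepsilon)\,r < r$, so $w \in M \cap B$.

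For the forward implication, suppose $w \in M \cap B$. I would first push $w$ into $P$ by a rotation about $\ell$, then slide it out to the boundary sphere. Decompose $w = w_\parallel + w_\perp$ and $q = q_\parallel + q_\perp$ into components along and orthogonal to $\ell$; since $q \in P$ and $\ell \subset P$, the vector $q_\perp$ lies in the line $P \cap \ell^\perp$. Choose the rotation $R$ about $\ell$ carrying $w_\perp$ to the ray through $q_\perp$ (or any direction of $P \cap \ell^\perp$ if $q_\perp = 0$); this is possible because rotations about $\ell$ act transitively on each sphere of $\ell^\perp$ once $\dim \ell^\perp \ge 2$, while $\dim \ell^\perp \le 1$ forces $w \in P$ already. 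Then $Rw \in P$, and $Rw \in M$ because $R$ fixes $v$ and $u$ and hence preserves both defining conditions. Moreover, the component along $\ell$ is untouched and the reverse triangle inequality gives $|Rw_\perp - q_\perp| = \big|\,|w_\perp| - |q_\perp|\,\big| \le |w_\perp - q_\perp|$, so $\delta(Rw,q) \le \delta(w,q) < r$ and thus $Rw \in M \cap P \cap B$. I expect this distance-nonincreasing rotation to be the main obstacle: a naive orthogonal projection onto $P$ would shrink $\delta(v,w)$ and can destroy the condition $\delta(v,w) > \delta(v,u)$, so the symmetry of $M$ about $\ell$ must be used in an essential way.

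It then remains to move $Rw$ from the open disk $M \cap P \cap B$ to the boundary circle $M \cap P \cap \partial(B)$. For this I would use that $M \cap P$ is connected (the exterior of a disk intersected with an open half-plane, where one can always route around the bounded disk) and unbounded, so it contains points outside $\overline{B}$ as well as the interior point $Rw$. Any path in $M \cap P$ joining an interior to an exterior point crosses $\partial(B)$, and since $M \cap P$ is relatively open the crossing point lies in $M \cap P \cap \partial(B)$; this is the desired $w'$. Combining the two directions establishes Lemma~\ref{lem_redu_dim}.
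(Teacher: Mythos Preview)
Your proof is correct and takes a genuinely different route from the paper's. The paper projects $w$ orthogonally onto $P$ to obtain a point $w_0$, then defines $w'$ as the second intersection of the line through $v$ and $w_0$ with the circle $P \cap \partial(B)$; it then verifies $u \in lune(v,w')$ by explicit distance computations, including a somewhat involved calculation that $\delta(v,w') \ge \delta(v,w)$. By contrast, you exploit the rotational symmetry of $M = \{w : u \in lune(v,w)\}$ about the axis $\ell = vu$: rotating $w$ about $\ell$ so that its $\ell^\perp$-component aligns with that of $q$ keeps $w$ in $M$ for free and can only decrease the distance to $q$, landing you inside $M \cap P \cap B$. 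You then finish with a topological argument (connectedness and unboundedness of the open planar region $M \cap P$) rather than an explicit formula for $w'$. Your approach is cleaner conceptually and avoids the coordinate computations; the paper's approach is more constructive in that it names $w'$ explicitly, which could in principle feed into later quantitative bounds, though the paper does not actually use that explicitness downstream. One small point worth stating more carefully is the connectedness of $M \cap P$: in coordinates with $v=0$, $u=(a,0)$ it is $\{(x,y): x^2+y^2>a^2,\; x>a/2\}$, which is indeed a single horseshoe-shaped region because the two arcs at $x$ near $a/2$ merge once $x>a$; you allude to this but a one-line justification would make the argument airtight.
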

\begin{proof}[Proof of Lemma~\ref{lem_redu_dim}]
$(\Leftarrow)$: This is the easy direction. Assume that there exists $w' \in P \cap \partial(B)$ such that $u \in lune(v, w')$. This means that $\delta(v, w') > \max \{\delta(v, u), \delta(w', u)\}$. Let $\epsilon = \frac{1}{2} (\delta(v, w') - \max \{\delta(v, u), \delta(w', u)\})$, and note that $\epsilon < \delta(v, w')$ by its definition. Let $w$ be the point on the line segment $v w'$ such that $\delta(w, w') = \epsilon$. It follows that $w \in B$ and $\delta(v, w) > \max \{\delta(v, u), \delta(w', u)\}$, meaning $u \in lune(v, w)$. Therefore, $w$ is as desired.

$(\Rightarrow)$: Let $w_0$ be the unique projection point of $w$ onto $P$, and let $d_0 = \delta(w, w_0)$. Let $l_{v w_0}$ be the line going through $v$ and $w_0$, and let $w'$ be the unique intersection point of $l_{v w_0}$ and $P \cap B$ that is not equal to $v$. It now suffices to prove that $u \in lune(v, w')$, which requires both $\delta(u, w') < \delta(v, w')$ and $\delta(u, v) < \delta(v, w')$

We first prove that $\delta(u, w') < \delta(v, w')$. Note that $\delta(v, w_0)^2 = \delta(v, w)^2 - d_0^2$ and $\delta(u, w_0)^2 = \delta(u, w)^2 - d_0^2$. Since $\delta(v, w) > \delta(u, w)$, it follows that $\delta(v, w_0)^2 > \delta(u, w_0)^2$ and thus $\delta(v, w_0) > \delta(u, w_0)$. This means that $\delta(v, w') = \delta(v, w_0) + \delta(w_0, w')> \delta(u, w_0) +  \delta(w_0, w') \geq \delta(u, w')$, where the last inequality is due to triangle inequality. Hence, $\delta(u, w') < \delta(v, w')$.

It remains to prove that $\delta(u, v) < \delta(v, w')$. Since $\delta(v, w) > \delta(u, v)$, it is enough to prove that $\delta(v, w') \geq \delta(v, w)$. Let $z$ be the mid-point of the line segment $vw'$. Let $h = \delta(q, z)$, $a = \delta(v, z)$, $b = \delta(z, w_0)$, and $c = \delta(q, w_0)$. Note that $b < a$ and $b^2 + h^2 = c^2$. With these new notations, we can now write $\delta(v, w')^2 = (2a)^2 = 4a^2$ and 
\begin{align*}
    \delta(v, w)^2 = & \delta(v, w_0)^2 + \delta(w, w_0)^2
    \leq (a + b)^2 + d_0^2 = a^2 + 2ab + b^2 + d_0^2.
\end{align*}

Since $b < a$ and $b^2 + h^2 = c^2$, it follows that
\begin{align*}
    \delta(v, w)^2 < & a^2 + 2a^2 + (c^2 - h^2)  + d_0^2
                   = 4a^2 + (c^2 + d_0^2) - (a^2 + h^2).
\end{align*}
Note that $c^2 + d_0^2 = \delta(w, q)^2 < \delta(v, q)^2 = r^2$, since $w \in B = B_r(q)$. Also note that $a^2 + h^2 = \delta(v, q)^2 = r^2$. It follows that
$$\delta(v, w)^2 < 4a^2 + r^2 - r^2 = 4a^2 = \delta(v, w')^2.$$
This shows that $\delta(v, w) < \delta(v, w')$ and completes the proof.
\end{proof}

We formalize notation for angles here: For any three points $x, y, z$ in a Euclidean plane, we use $\angle yxz$ to denote the magnitude of the rotation from the ray $\overrightarrow{xy}$ to the ray $\overrightarrow{xz}$. This means $\angle yxz \neq \angle zxy$ unless $x, y, z$ are collinear. We note that in the rest of the proof, the sign of an angle does not matter if we only care about the cosine value of it, but it does matter or help with clarifying the situation sometimes.

\begin{lemma} \label{lem_angle}
Let $w \in \partial(B) \cap P$ and $\alpha = \angle wvu \in [-\pi, \pi]$. The following statements are true.\\
(1) If $|\alpha| \in [0, \frac{\pi}{3}]$, then $\delta(v, w) > \delta(v, u)$ if and only if $u \in lune(v, w)$.\\
(2) If $|\alpha| \in [\frac{\pi}{3}, \pi]$, then $\delta(v, w) > \delta(w, u)$ if and only if $u \in lune(v, w)$.
\end{lemma}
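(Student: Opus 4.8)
The plan is to collapse both statements into the single triangle $vuw$ and read off the two comparisons from the law of cosines. Recall that by definition $u \in lune(v,w)$ holds exactly when \emph{both} $\delta(v,u) < \delta(v,w)$ and $\delta(w,u) < \delta(v,w)$ hold. Writing $p = \delta(v,w)$, $s = \delta(v,u)$, and $t = \delta(w,u)$ for brevity, this reads $u \in lune(v,w) \iff (s < p \text{ and } t < p)$. Consequently, in part (1) the direction $(\Leftarrow)$ is immediate, since membership in the lune already forces $s < p$; likewise in part (2) the direction $(\Leftarrow)$ is immediate, since membership forces $t < p$. All of the content therefore lies in the reverse implications: in (1) that $s < p$ by itself forces $t < p$, and in (2) that $t < p$ by itself forces $s < p$.

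First I would apply the law of cosines at the vertex $v$, where the angle between the rays $\overrightarrow{vw}$ and $\overrightarrow{vu}$ is $\alpha$, to obtain $t^2 = p^2 + s^2 - 2ps\cos\alpha$. Rearranging gives the single identity that drives everything:
\begin{equation*}
t^2 - p^2 = s\,(s - 2p\cos\alpha),
\end{equation*}
so that, using $s > 0$, we have the equivalence $t < p \iff s < 2p\cos\alpha$. Since $\cos\alpha$ depends only on $|\alpha|$, the whole proof then reduces to feeding the two angle regimes into this one equivalence.

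For part (1), I would assume $|\alpha| \le \pi/3$, so $\cos\alpha \ge \tfrac12$ and hence $2p\cos\alpha \ge p$. If $s < p$, then $s < p \le 2p\cos\alpha$, which by the displayed equivalence yields $t < p$; combined with $s < p$ this is exactly $u \in lune(v,w)$. For part (2), I would assume $|\alpha| \ge \pi/3$ together with the hypothesis $t < p$; the equivalence then gives $s < 2p\cos\alpha$. In the subrange $\pi/3 \le |\alpha| < \pi/2$ we have $0 < \cos\alpha \le \tfrac12$, so $s < 2p\cos\alpha \le p$, and together with $t < p$ this produces $u \in lune(v,w)$.

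The one place demanding care — and the only genuine obstacle — is the regime $|\alpha| \ge \pi/2$ in part (2), where $\cos\alpha \le 0$. Here the equivalence cannot be used directly, so I would instead observe that $t^2 = p^2 + s^2 - 2ps\cos\alpha \ge p^2 + s^2 > p^2$, whence $t > p$ unconditionally; thus the hypothesis $\delta(v,w) > \delta(w,u)$ can never hold and, simultaneously, $u \notin lune(v,w)$, so both sides of the asserted equivalence are false and it holds vacuously. Finally I would check that the strict inequalities line up at the shared boundary $|\alpha| = \pi/3$, where $\cos\alpha = \tfrac12$ makes $s < p$ strict in both parts, so no inconsistency arises there. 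No other degenerate configuration occurs, since $u, w \neq v$ guarantees $p, s > 0$ and a well-defined angle $\alpha$.
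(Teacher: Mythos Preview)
Your proof is correct and follows essentially the same approach as the paper: both reduce the lune condition to the law-of-cosines identity $\delta(v,w)^2 - \delta(w,u)^2 = \delta(v,u)\bigl(2\delta(v,w)\cos\alpha - \delta(v,u)\bigr)$ and then compare $\cos\alpha$ to $\tfrac12$. The only cosmetic difference is that your part~(2) splits off the subcase $|\alpha|\ge \pi/2$ as vacuous, whereas the paper handles all of $|\alpha|\in[\pi/3,\pi]$ at once by observing that $2\delta(v,w)\cos\alpha > \delta(v,u)$ together with $\cos\alpha\le\tfrac12$ already forces $\delta(v,w) \ge 2\delta(v,w)\cos\alpha > \delta(v,u)$ without needing to know the sign of $\cos\alpha$.
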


\begin{proof} [Proof of Lemma~\ref{lem_angle}]
Note that $u \in lune(v, w)$ if and only if $\delta(v, w) > \max \{\delta(v, u), \delta(w, u)\}$. Therefore, it suffices to show that $\delta(v, w) > \delta(v, u)$ implies $\delta(v, w) > \delta(w, u)$ if $|\alpha| \in [0, \frac{\pi}{3}]$ to prove (1), and that $\delta(v, w) > \delta(w, u)$ implies $\delta(v, w) > \delta(v, u)$ if $|\alpha| \in [\frac{\pi}{3}, \pi]$ to prove (2).

Recall that $d = \delta(v, u)$. Let $t = \delta(v, w)$ for convenience. By the law of cosines,
\begin{align*}
\delta(w, u)^2 =& \delta(v, u)^2 + \delta(v, w)^2 - 2 \delta(v, u) \delta(v, w) \cos \alpha = d^2 + t^2 - 2dt \cos \alpha.
\end{align*}

To prove (1), assume that $|\alpha| \in [0, \frac{\pi}{3}]$ and $\delta(v, w) > \delta(v, u)$. Then, $|\alpha| \in [0, \frac{\pi}{3}]$ implies that $\cos \alpha \geq \frac{1}{2}$ and $\delta(v, w) > \delta(v, u)$ means that $t > d$. Using the law of cosines, it follows that
\begin{align*}
\delta(v, w)^2 - \delta(u, w)^2 = & 2dt \cos \alpha - d^2
= d( (2 \cos \alpha) t - d)
\geq d(2 \cdot \frac{1}{2} t - d)
= d(t - d) > 0.
\end{align*}

Therefore, $\delta(v, w)^2 - \delta(u, w)^2 > 0$, meaning $\delta(v, w) > \delta(w, u)$.

To prove (2), assume that $|\alpha| \in [\frac{\pi}{3}, \pi]$ and $\delta(v, w) > \delta(w, u)$. Now, $|\alpha| \in [\frac{\pi}{3}, \pi]$ implies $\cos \alpha \leq \frac{1}{2}$, and $\delta(v, w) > \delta(w, u)$ implies $\delta(v, w)^2 - \delta(w, u)^2 > 0$. By the law of cosines, this means $d(2t \cos \alpha - d) = 2 d t \cos \alpha - d^2 = \delta(v, w)^2 - \delta(v, u)^2 > 0$. Since $d > 0$, we have $2t \cos \alpha - d > 0$. Since $\cos \alpha \leq \frac{1}{2}$, it follows that
$$\delta(v, w) - \delta(v, u) = t - d = 2 \cdot \frac{1}{2} t - d \geq 2 t \cos \alpha - d > 0.$$
This shows $\delta(v, w) > \delta(v, u)$ and hence completes the proof of Lemma~\ref{lem_angle}.
\end{proof}

\begin{lemma} \label{lem_bound}
The following statements are true.\\
\indent (1) There exists $w \in P \cap \partial(B)$ such that $\angle wvu \in [-\frac{\pi}{3}, \frac{\pi}{3}]$ and $u \in lune(v, w)$ if and only if $\theta \in [0, \frac{5}{6} \pi]$ and $d < r \cdot g(\theta)$ where
\begin{align*}
g(\theta) = 
\begin{cases}
2,  & \text{if   } \theta\in [0, \frac{\pi}{3}] \\
2 \cos(\theta - \pi/3), & \text{if } \theta \in [\frac{\pi}{3},  \frac{5}{6}\pi]\\
\text{undefined}, & \text{if } \theta \in (\frac{5}{6}\pi, \pi]
\end{cases}
\end{align*}

(2) There exists $w \in P \cap \partial(B)$ such that $\angle wvu \in [-\pi, -\frac{\pi}{3}] \cup [\frac{\pi}{3}, \pi]$ and $u \in lune(v, w)$ if and only if $d < r \cdot h(\theta)$ where
\begin{align*}
h(\theta) = 
\begin{cases}
2 \cos(\theta - \pi/3),  & \text{if   } \theta \in [0, \frac{2}{3} \pi] \\
2 (\cos \theta + 1), & \text{if } \theta \in [\frac{2}{3} \pi, \pi]
\end{cases}
\end{align*}
\end{lemma}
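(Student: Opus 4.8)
The plan is to work entirely in the plane $P$ and convert both statements into one-variable trigonometric optimizations, using Lemma~\ref{lem_angle} to replace the lune condition by a distance inequality. Fix coordinates with $v$ at the origin and measure every direction as a signed angle from the ray $\overrightarrow{vu}$, so that $q$ lies in direction $\theta$ and a candidate $w$ lies in direction $\alpha=\angle wvu$. The linchpin is an elementary chord-length identity for the circle $\partial(B)$, which has radius $r$ and passes through $v$ (since $\delta(v,q)=r$): a ray leaving $v$ in direction $\alpha$ meets $\partial(B)$ again at a unique distinct point $w$ if and only if $|\alpha-\theta|<\frac{\pi}{2}$, and in that case $\delta(v,w)=2r\cos(\alpha-\theta)$. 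This follows because the midpoint $m$ of the chord $vw$ satisfies $qm\perp vw$, so in the right triangle $vqm$ we get $\delta(v,m)=r\cos(\angle qvw)$ with $\angle qvw=\alpha-\theta$ and $\delta(v,w)=2\delta(v,m)$. Once this identity is in hand, both parts reduce to maximizing a cosine expression over a constrained arc and comparing the result to $d$.

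For part (1), Lemma~\ref{lem_angle}(1) tells us that for $w$ with $\angle wvu\in[-\frac{\pi}{3},\frac{\pi}{3}]$ the condition $u\in lune(v,w)$ is equivalent to $\delta(v,w)>\delta(v,u)=d$. Hence such a $w$ exists exactly when $d<\sup\{2r\cos(\alpha-\theta):\alpha\in[-\frac{\pi}{3},\frac{\pi}{3}],\ |\alpha-\theta|<\frac{\pi}{2}\}$. I would maximize by taking $\alpha$ as close to $\theta$ as the window allows: for $\theta\le\frac{\pi}{3}$ one can take $\alpha=\theta$ and the supremum is $2r$; for $\theta>\frac{\pi}{3}$ the best feasible choice is $\alpha=\frac{\pi}{3}$, giving $2r\cos(\theta-\frac{\pi}{3})$. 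The existence constraint $|\alpha-\theta|<\frac{\pi}{2}$ at $\alpha=\frac{\pi}{3}$ forces $\theta<\frac{5}{6}\pi$; for $\theta>\frac{5}{6}\pi$ the entire window $[-\frac{\pi}{3},\frac{\pi}{3}]$ fails to meet $\partial(B)$, so no admissible $w$ exists and $g$ is left undefined. This reproduces $g(\theta)$ exactly.

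For part (2), Lemma~\ref{lem_angle}(2) says that for $w$ with $|\angle wvu|\ge\frac{\pi}{3}$ we have $u\in lune(v,w)$ iff $\delta(v,w)>\delta(w,u)$. Expanding $\delta(w,u)^2$ by the law of cosines in triangle $vwu$ (angle $\alpha$ at $v$) and cancelling, this inequality becomes $2\,\delta(v,w)\cos\alpha>d$; substituting the chord identity turns the left side into $\Phi(\alpha):=4r\cos(\alpha-\theta)\cos\alpha$, and a product-to-sum identity gives $\Phi(\alpha)=2r\bigl(\cos(2\alpha-\theta)+\cos\theta\bigr)$. The task is then to maximize $\Phi$ subject to $|\alpha|\ge\frac{\pi}{3}$, $\cos\alpha>0$, and $|\alpha-\theta|<\frac{\pi}{2}$. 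The unconstrained maximizer is $\alpha=\frac{\theta}{2}$; it is admissible precisely when $\theta\ge\frac{2}{3}\pi$, yielding $\Phi=2r(1+\cos\theta)$, while for $\theta<\frac{2}{3}\pi$ the maximizer must be clamped to the nearest feasible value $\alpha=\frac{\pi}{3}$, yielding $2r(\cos(\frac{2}{3}\pi-\theta)+\cos\theta)$, which a short trigonometric simplification rewrites as $2r\cos(\theta-\frac{\pi}{3})$. These two expressions are exactly $r\,h(\theta)$.

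In each part I would finish by upgrading the supremum computation to the stated "iff": because the supremum is attained at an admissible angle, the strict inequality $d<r\,g(\theta)$ (resp.\ $d<r\,h(\theta)$) produces an explicit witness $w$ with $u\in lune(v,w)$, while $d\ge r\,g(\theta)$ (resp.\ $d\ge r\,h(\theta)$) makes the relevant distance inequality fail for every admissible $w$. The main obstacle is the constrained optimization in part (2): one must juggle three simultaneous restrictions on $\alpha$ (the angular band $|\alpha|\ge\frac{\pi}{3}$, the sign condition $\cos\alpha>0$ emerging from the lune inequality, and the circle-hitting condition $|\alpha-\theta|<\frac{\pi}{2}$), decide in each $\theta$-regime whether the free optimum $\frac{\theta}{2}$ is feasible, confirm the clamped optimum lies on the $q$-side rather than the opposite arc, and carry out the identity collapsing $\cos(\frac{2}{3}\pi-\theta)+\cos\theta$ to $\cos(\theta-\frac{\pi}{3})$. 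Establishing the chord-length identity cleanly and tracking strict versus non-strict inequalities so that the boundary values $\theta=\frac{5}{6}\pi$ and $\theta=\frac{2}{3}\pi$ land correctly is the other place to be careful.
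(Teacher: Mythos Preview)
Your plan matches the paper's proof essentially step for step: both reduce via Lemma~\ref{lem_angle} to a distance inequality, parametrize $w\in P\cap\partial(B)$ by the angle $\alpha=\angle wvu$, invoke the chord-length identity $\delta(v,w)=2r\cos(\alpha-\theta)$ (the paper writes $2r\cos(\theta+\alpha)$ under a different sign convention), and for part~(2) apply the law of cosines and the product-to-sum formula to reduce to maximizing $\cos(2\alpha-\theta)$ over the constrained arc. The only notable difference is organizational: you shortcut the case analysis by imposing $\cos\alpha>0$ up front (since $2\,\delta(v,w)\cos\alpha>d>0$ forces it), whereas the paper carries the full domain $([-\pi,-\tfrac{\pi}{3}]\cup[\tfrac{\pi}{3},\pi])\cap A_\theta$ through a three-case claim; both routes land on the same piecewise formulas.
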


\begin{proof}
For convenience, for every $w \in P \cap \partial(B)$, let $\alpha_w = \angle wvu \in [-\pi, \pi]$. By the law of cosines, we have $\delta(v, u) = d$, $\delta(v, w) = 2r \cos(\theta + \alpha_w)$, and $\delta(u, w) = \sqrt{d^2 + 4r^2 \cos^2(\theta + \alpha_w) - 2d (2 r \cos(\theta + \alpha_w)) \cos \alpha_w}$. For every $\beta \in [0, \pi]$, let $A_\beta = \{\alpha_w \in [-\pi, \pi]: \cos(\beta + \alpha_w) \geq 0\}$. Note that for every $w \in P \cap \partial(B)$, $\theta + \alpha_w = \angle wvq \in [-\pi/2, \pi/2]$, meaning that $\cos(\theta + \alpha_w) \geq 0$ and therefore $\alpha_w\in A_\theta$.

\bigskip
We first prove (1). By Lemma~\ref{lem_angle}, we have
\begin{align*}
& \exists w \in P \cap \partial(B) \text{ such that } \alpha_w \in [-\frac{\pi}{3}, \frac{\pi}{3}] \text{ and } u \in lune(v, w)\\
\Leftrightarrow & \exists w \in P \cap \partial(B) \text{ such that } \alpha_w \in [-\frac{\pi}{3}, \frac{\pi}{3}] \text{ and } \delta(v, w) > \delta(v, u)\\
\Leftrightarrow & \exists \alpha \in [-\frac{\pi}{3}, \frac{\pi}{3}] \cap A_\theta \text{ such that } 2r \cos(\theta + \alpha) > d\\
\Leftrightarrow & d < \sup_{\alpha \in [-\frac{\pi}{3}, \frac{\pi}{3}] \cap A_\theta } \{2r \cos(\theta + \alpha)\} = 2r \sup_{\alpha \in [-\frac{\pi}{3}, \frac{\pi}{3}] \cap A_\theta} \{\cos(\theta + \alpha)\}.
\end{align*}

Recall that $\theta \in [0, \pi]$. If $\theta \in [0, \pi/3]$, then $\alpha^* = \arg \max_{\alpha \in [-\pi/3, \pi/3]} \{\cos (\theta + \alpha)\} = -\theta$ and $\sup_{\alpha \in [-\pi/3, \pi/3]} \{\cos (\theta + \alpha)\} = \cos (\theta + \alpha^*) = 0$. If $\theta \in [\pi/3, 5 \pi/6]$, then $\alpha^* = \arg \max_{\alpha \in [-\pi/3, \pi/3]} \{\cos (\theta + \alpha)\} = -\pi/3$ and $\sup_{\alpha \in [-\pi/3, \pi/3]} \{\cos (\theta + \alpha)\} = \cos (\theta + \alpha^*) = \cos(\theta - \pi/3) \geq 0$. If $\theta \in (5\pi/6, \pi)$, then $\cos (\theta + \alpha) < 0$ for all $\alpha \in [-\pi/3, \pi/3]$, meaning that $[-\pi/3, \pi/3] \cap A_\theta = \emptyset$. Hence, we conclude that
\begin{align*}
& \sup_{\alpha \in [-\frac{\pi}{3}, \frac{\pi}{3}] \cap A_\theta} \{\cos(\theta + \alpha)\} =
\begin{cases}
\cos(0) = 1 & \text{ if } \theta \in [0, \pi/3]\\
\cos(\theta - \pi/3) & \text{ if } \theta \in [\pi/3, 5\pi/6]\\
\text{undefined} & \text{ if } \theta \in (5\pi/6, \pi]
\end{cases}
\end{align*}
This shows that $g(\theta) = 2 \sup_{\alpha \in [-\frac{\pi}{3}, \frac{\pi}{3}] \cap A_\theta} \{\cos(\theta + \alpha)\}$ and therefore proves (1).

\bigskip
We now prove (2). By Lemma~\ref{lem_angle}, we have
\begin{align*}
& \exists w \in P \cap \partial(B) \text{ such that } \alpha_w \in [-\pi, -\frac{\pi}{3}] \cup [\frac{\pi}{3}, \pi] \text{ and} u \in lune(v, w)\\
\Leftrightarrow & \exists w \in P \cap \partial(B) \text{ such that } \alpha_w \in [-\pi, -\frac{\pi}{3}] \cup [\frac{\pi}{3}, \pi] \text{ and} \delta(v, w) > \delta(w, u)\\
\Leftrightarrow & \exists \alpha \in ([-\pi, -\frac{\pi}{3}] \cup [\frac{\pi}{3}, \pi]) \cap A_\theta \text{ such that }\\
& (2r \cos(\theta + \alpha))^2 > d^2 + 4r^2 \cos^2(\theta + \alpha) - 2d (2 r \cos(\theta + \alpha)) \cos \alpha\\
\Leftrightarrow & \exists \alpha \in ([-\pi, -\frac{\pi}{3}] \cup [\frac{\pi}{3}, \pi]) \cap A_\theta \text{ such that } d < 4r \cos(\theta + \alpha) \cos \alpha
\end{align*}

Note that $2\cos(\theta + \alpha) \cos \alpha = \cos \theta + \cos(\theta + 2 \alpha)$. It follows that there exists some $w \in P \cap \partial(B)$ such that $\alpha_w \in [-\pi, -\frac{\pi}{3}] \cup [\frac{\pi}{3}, \pi]$ and $u \in lune(v, u)$ if and only if there exists $\alpha \in ([-\pi, -\frac{\pi}{3}] \cup [\frac{\pi}{3}, \pi]) \cap A_\theta$ such that
$$d < 2r (\cos \theta + \sup_{\alpha \in ([-\pi, -\frac{\pi}{3}] \cup [\frac{\pi}{3}, \pi]) \cap A_\theta} \{\cos (\theta + 2 \alpha)\}).$$

For convenience, define
$$s(\theta) = \sup_{\alpha \in ([-\pi, -\frac{\pi}{3}] \cup [\frac{\pi}{3}, \pi]) \cap A_\theta} \{\cos (\theta + 2 \alpha)\}, \forall \theta \in [0, \pi].$$
We now make the following claim.

\begin{claim} \label{claim_sup}
For $\theta \in [0, \pi]$,
\begin{align*}
s(\theta) = 
\begin{cases}
\cos (\theta - \frac{2}{3}\pi)   &\text{if } \theta \in [0, \frac{2}{3}\pi]\\
1                                   &\text{if } \theta \in [\frac{2}{3}\pi, \pi]
\end{cases}
\end{align*}
\end{claim}

Before proving the claim, notice that if Claim~\ref{claim_sup} is true, then $2r(\cos \theta + s(\theta)) = 2r (\cos \theta + \cos (\theta - \frac{2}{3} \pi) = 2r \cos (\theta - \frac{1}{3}\pi)$ if $\theta \in [0, \frac{2}{3} \pi]$. This means that if Claim~\ref{claim_sup} is true, then
\begin{align*}
2r(\cos \theta + s(\theta)) =
\begin{cases}
2r \cos (\theta - \frac{1}{3}\pi), & \text{if } \theta \in [0, \frac{2}{3} \pi]\\
2r (\cos \theta + 1), & \text{if } \theta \in [\frac{2}{3}\pi, \pi]
\end{cases}
\end{align*}
It follows that $r \cdot h(\theta) = 2r(\cos \theta + s(\theta))$ for all $\theta \in [0, \pi]$, and this then completes the proof of (2). Therefore, to prove (2), it suffices to prove Claim~\ref{claim_sup}.

\begin{proof}[Proof of Claim~\ref{claim_sup}]
We first prove the easy case that $s(\theta) = 1$ if $\frac{2}{3}\pi \leq \theta \leq \pi$. Since the cosine function is bounded above by 1, we know that $s(\theta) \leq 1$ for all $\theta$. For every $\theta \in [\frac{2}{3}\pi, \pi]$, let $\alpha_\theta = -\frac{1}{2}\theta$. This means that $\alpha_\theta \in [-\frac{1}{2} \pi, -\frac{1}{3}\pi]$ and $\theta + \alpha_\theta = \frac{1}{2} \theta \in [\frac{1}{3} \pi, \frac{1}{2} \pi]$ and therefore $\cos(\theta + \alpha_\theta) \geq 0$. It follows that $\theta_\alpha \in ([-\pi, -\frac{1}{3} \pi] \cup [\frac{1}{3} \pi, \pi]) \cap A_\theta$. Since $\cos(\theta + 2 \alpha_\theta) = \cos(0) = 1$, it follows that $s(\theta) = 1$ for all $\theta \in [\frac{2}{3}\pi, \pi]$.

\bigskip
It remains to prove $s(\theta) = \cos(\theta - \frac{2}{3} \pi)$ for all $\theta \in [0, \frac{2}{3} \pi]$. Let $D(\theta) = ([-\pi, -\frac{1}{3} \pi] \cup [\frac{1}{3} \pi, \pi]) \cap A_\theta$ for all $\theta \in [0, \frac{2}{3} \pi]$ for convenience. Observe that
\begin{align*}
D(\theta) = & \{\alpha \in [-\pi, -\frac{1}{3} \pi] \cup [\frac{1}{3} \pi, \pi] : \cos(\theta + \alpha) \geq 0\}\\
= & 
\begin{cases}
[-\frac{\pi}{2} - \theta, -\frac{\pi}{3}] \cup [\frac{\pi}{3}, \frac{\pi}{2} - \theta] & \text{if } \theta \in [0, \frac{1}{6}\pi]\\
[-\frac{\pi}{2} - \theta, -\frac{\pi}{3}] & \text{if } \theta \in (\frac{1}{6}\pi, \frac{1}{2}\pi]\\
[-\pi, -\frac{\pi}{3}] \cup [\frac{3}{2}\pi - \theta, \pi]  & \text{if } \theta \in (\frac{1}{2} \pi, \frac{2}{3}\pi]
\end{cases}
\end{align*}

We will show that $s(\theta) = \sup_{D(\theta)} \{\cos(\theta + 2 \alpha)\} = \cos(\theta - \frac{2}{3} \pi)$ in each one of the three cases separately: $\theta \in  [0, \frac{1}{6} \pi]$, $\theta \in (\frac{1}{6}\pi, \frac{1}{2}\pi]$, and $\theta \in (\frac{1}{2} \pi, \frac{2}{3}\pi]$.

\medskip
\textbf{Case 1:} If $\theta \in [0, \frac{1}{6} \pi]$ and $\alpha \in D(\theta) = [-\frac{\pi}{2} - \theta, -\frac{\pi}{3}] \cup [\frac{\pi}{3}, \frac{\pi}{2} - \theta]$, then
\begin{align*}
    2 \alpha \in & [-\pi - 2 \theta, -\frac{2}{3} \pi] \cup [\frac{2}{3} \pi, \pi - 2 \theta]
    = [\pi - 2 \theta, \frac{4}{3} \pi] \cup [\frac{2}{3} \pi, \pi - 2 \theta]
    = [\frac{2}{3} \pi, \frac{4}{3} \pi].
\end{align*}
Therefore,
$$s(\theta) = \sup_{\alpha \in D(\theta)} \{\cos (\theta + 2 \alpha)\}
= \sup_{\beta \in [\frac{2}{3} \pi, \frac{4}{3} \pi]} \{\cos(\theta + \beta)\}.$$

Since $\theta \in [0, \frac{1}{6} \pi]$, it follows that
$$s(\theta) = \sup_{\beta \in [\frac{2}{3} \pi, \frac{4}{3} \pi]} \{\cos(\theta + \beta)\} = \cos(\theta + \frac{4}{3} \pi) = \cos (\theta - \frac{2}{3} \pi).$$

\medskip
\textbf{Case 2:} If $\theta \in (\frac{1}{6}\pi, \frac{1}{2}\pi]$ and $\alpha \in D(\theta) = [-\frac{\pi}{2} - \theta, -\frac{\pi}{3}]$, then $2 \alpha \in  [-\pi - 2\theta, -\frac{2}{3} \pi] = [\pi - 2 \theta, \frac{4}{3} \pi]$. We can then write
$$s(\theta) = \sup_{\alpha \in D(\theta)} \{\cos (\theta + 2 \alpha)\} = \sup_{\beta \in [\pi - 2 \theta, \frac{4}{3} \pi]} \{\cos (\theta + \beta)\}.$$
Notice that since $\theta \in (\frac{1}{6}\pi, \frac{1}{2}\pi]$, $\cos(\theta + \beta) \leq 0$ for all $\beta \in [\pi - 2 \theta, \frac{3}{2} \pi - \theta]$, and that $\cos(\theta + \beta) \geq 0$ if $\beta \in [\frac{3}{2} \pi - \theta, \frac{4}{3} \pi]$. It follows that

\begin{align*}
    s(\theta) & = \sup_{\beta \in [\frac{3}{2} \pi - \theta, \frac{4}{3} \pi]} \{\cos (\theta + \beta)\} = \cos(\theta + \frac{4}{3} \pi) = \cos(\theta - \frac{2}{3} \pi).
\end{align*}

\medskip
\textbf{Case 3:} Finally, if $\theta \in (\frac{1}{2} \pi, \frac{2}{3}\pi]$ and $\alpha \in D(\theta) = [-\pi, -\frac{\pi}{3}] \cup [\frac{3}{2}\pi - \theta, \pi]$, then $2 \alpha \in [-2 \pi, -\frac{2}{3}\pi] \cup [3\pi - 2\theta, 2\pi] = [0, \frac{4}{3}\pi] \cup [\pi - 2 \theta, 0]$. Therefore,
$$s(\theta) = \sup_{\beta \in [0, \frac{4}{3}\pi] \cup [\pi - 2 \theta, 0]} \{ \cos (\theta + \beta) \}.$$
Notice that since $\theta \in (\frac{1}{2} \pi, \frac{2}{3}\pi]$, we have
$$\sup_{\beta \in [0, \frac{4}{3}\pi]} \{ \cos (\theta + \beta) \} = \cos(\theta + \frac{4}{3} \pi) = \cos(\theta - \frac{2}{3} \pi),$$
and
$$\sup_{\beta \in [\pi - 2 \theta, 0]} \{ \cos (\theta + \beta) \} = \cos ( \theta + (\pi - 2 \theta) ) = \cos (\pi - \theta).$$
Therefore,
\begin{align*}
s(\theta) =& \sup_{\beta \in [0, \frac{4}{3}\pi] \cup [\pi - 2 \theta, 0]} \{ \cos (\theta + \beta) \}
= \max \{\cos(\theta - \frac{2}{3} \pi), \cos (\pi - \theta)\}
= \max \{\cos(\theta - \frac{2}{3} \pi), \cos (\theta - \pi)\}.
\end{align*}
Observe that $\cos(\theta - \frac{2}{3} \pi) > \cos (\theta - \pi)\}$ for all $\theta \in (\frac{1}{2} \pi, \frac{2}{3}\pi]$. It follows that $s(\theta) = \cos(\theta - \frac{2}{3} \pi)$ for all $\theta \in (\frac{1}{2} \pi, \frac{2}{3}\pi]$.

Overall, we proved that $s(\theta) = \cos(\theta - \frac{2}{3} \pi)$ for all $\theta \in [0, \frac{2}{3} \pi]$ and $s(\theta) = 1$ for all $\theta \in [\frac{2}{3} \pi, \pi]$, and therefore we completed the proof of Claim~\ref{claim_sup}.
\end{proof}
\end{proof}

Finally, we use Lemma~\ref{lem_redu_dim} and Lemma~\ref{lem_bound} to complete the proof of Lemma~\ref{lem_search_critiria}. Let functions $g, h$ be as defined in Lemma~\ref{lem_bound}. By Lemma~\ref{lem_bound}, we have equivalent statements as follows.
\begin{align*}
& \exists w \in P \cap \partial(B) \text{ such that } u \in lune(v, w)\\
\Leftrightarrow & \text{either }\exists w \in P \cap \partial(B) \text{ such that } \alpha_w \in [-\frac{\pi}{3}, \frac{\pi}{3}] \text{ and } u \in lune(v, w),\\
& \text{ or that } \exists w \in P \cap \partial(B) \text{ such that } \alpha_w \in [-\pi, -\frac{\pi}{3}] \cup [\frac{\pi}{3}, \pi] \text{ and } u \in lune(v, w)\\
\Leftrightarrow & d < r \cdot g(\theta) \text{ or } d < r \cdot h(\theta)\\
\Leftrightarrow & d < r \cdot \max \{g(\theta), h(\theta)\}
\end{align*}
By Lemma~\ref{lem_redu_dim}, it follows that there exists some $w \in B$ such that $u \in lune(v, w)$ if and only if $d < r \cdot \max\{g(\theta), h(\theta)\}$. Therefore, it suffices to show that $f(\theta) = \max \{g(\theta), h(\theta)\}$ for all $\theta \in [0, \pi]$.

By Lemma~\ref{lem_bound},
\begin{align*}
\max \{g(\theta), h(\theta)\} =
\begin{cases}
\max \{2, 2 \cos (\theta - \frac{1}{3} \pi)\}     & \text{if } \theta \in [0, \frac{1}{3} \pi]\\
2 \cos (\theta - \frac{1}{3} \pi)                  & \text{if } \theta \in [\frac{1}{3} \pi, \frac{2}{3} \pi]\\
\max \{2 \cos (\theta - \frac{1}{3} \pi), 2(\cos \theta + 1)\} & \text{if } \theta \in [\frac{2}{3} \pi, \frac{5}{6} \pi]\\
2(\cos \theta + 1) & \text{if } \theta \in [\frac{5}{6} \pi, \pi]
\end{cases}
\end{align*}
Note that $\cos (\theta - \frac{1}{3} \pi) \leq 1$ for all $\theta \in [0, \frac{1}{3} \pi]$, so $\max \{g(\theta), h(\theta)\} = \max \{2, 2 \cos (\theta - \frac{1}{3} \pi)\} = 2$ if $\theta \in [0, \frac{1}{3} \pi]$. Also note that for all $\theta$,

\begin{align*}
& (\cos \theta + 1) - \cos (\theta - \frac{1}{3} \pi) = 1 - 2 \sin (\frac{1}{2}(2 \theta - \pi/3)) \sin(\frac{1}{2}(\pi/3))
= 1 - \sin (\theta - \pi/6) \geq 0.
\end{align*}
Therefore, if $\theta \in [\frac{2}{3} \pi, \frac{5}{6} \pi]$, then
\begin{align*}
    \max \{g(\theta), h(\theta)\} & = \max \{2 \cos (\theta - \frac{1}{3} \pi), 2(\cos \theta + 1)\} = 2(\cos \theta + 1).
\end{align*}
It follows that $\max \{g(\theta), h(\theta)\} = f(\theta)$ for all $\theta \in [0, \pi]$.
\end{proof}

\subsection{Future directions for implementing conflicting nodes}

Although conflicting nodes seem to be a great tool to \emph{get out of local minimum points}, it is not so clear how to utilize them to run the search efficiently in practice. We now discuss some challenges on implementation of conflicting nodes. These challenges lead to several interesting future directions on this subject.

The first challenge is the fact that a non-neighbor of the local minimum node $
v$ could be a conflicting node to multiple neighbors of $v$: Recall that $w$ is a conflicting node for the edge $\overrightarrow{vu}$ in an MRNG if $u \in lune(v, w)$. By this definition, it is possible that more than one neighbor of $v$ are all contained in $lune(v, w)$, and in that case $w$ would be a conflicting node to multiple edges coming out of $v$. It is then inefficient to store the duplicated conflicting nodes when building the graph and to check them during the search. For each non-neighbor $w$ of $v$, assigning it as a conflicting node to different edges coming out of $v$ may result in different search performances, but we do not know which assignment optimizes the search performance. Therefore, one future direction would be to figure out how to assign non-neighbors to the neighbors of a node as conflicting nodes so that the search performance can be optimized.

Furthermore, for every node $v$ in an MRNG and for every non-neighbor $w$ of $v$, let $k_v(w)$ denote the number of edges coming out of $v$ that $w$ is conflicting to. We observed that for a fixed number of data points, $k_v(w)$ grows as the dimension of the dataset increases, as shown in Figure~\ref{fig_conf_dup_count}. This agrees with the curse of dimensionality and tells us that in high dimension, it would be more crucial to figure out how to assign the conflicting nodes to the neighbors. Future work on this assignment might involve different algorithms for low and high dimensions.

\begin{figure}[H]
\centering
\subfigure{\includegraphics[scale = 0.49]{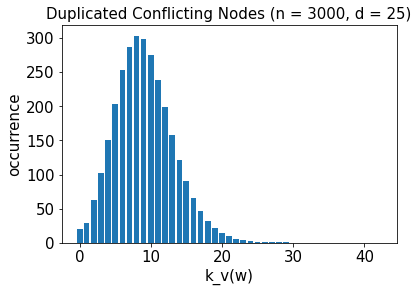}}%
\subfigure{\includegraphics[scale = 0.49]{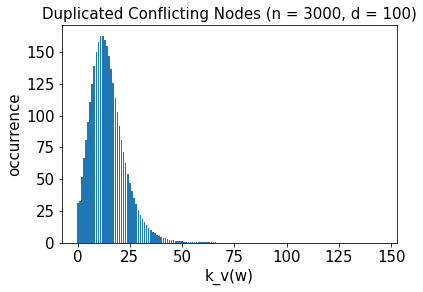}}%
\caption{Histograms of $k_v(w)$ among all pairs of $(v, w)$, where $v, w \in V(G)$ and $w$ is a conflicting node to some edge coming out of $v$: on the left, the dataset has 3000 data points in 25-dimension; on the right, the dataset has 3000 data points in 100-dimension.}
\label{fig_conf_dup_count}
\end{figure}

In addition, let a normal greedy search algorithm be a phase I search that returns a local minimum point $v$ with respect to the query $q$, and let the conflict search starting from $v$ be phase II search. Another future direction would be to study the trade-off between phase I and phase II search and how to distribute computation power between the two phases to obtain the highest search accuracy. In particular, as we would normally encounter several local minimum points during phase I search, when is the moment to stop phase I and move on to phase II?

Last but not least, it is possible that phase I search converges to the true nearest neighbor of a query faster than phase II search. In that case, it may make more sense to only use phase II search to find a conflicting $w$ to an edge $\overrightarrow{vu}$ for some $u \in N_{out}(v)$ that is closer to $q$ than $v$ and then get back to phase I search, every time when phase I is stuck at a local minimum node with respect to $q$. In other words, it may make sense to only use phase II search as a subroutine to get around local minimum points during phase I search. There are many possible ways to make use of conflicting nodes, and thus in future work there needs to be more experiments or theory to find out which one is the best.

\end{document}